\newcommand{\eps}{\varepsilon}
\newcommand{\LAS}{\text{\sc{Las}}}
\newcommand{\set}[1]{\left\{ #1 \right\}}
\newcommand{\PS}{\mathcal{P}}
\DeclareMathOperator*{\adjugate}{adj}
\newtheorem{theorem}{Theorem}
\newtheorem{lemma}[theorem]{Lemma}
\newtheorem{corollary}[theorem]{Corollary}
\newtheorem{definition}{Definition}
\newtheorem{remark}{Remark}
\begin{document}
\title{\textbf{On the Hardest Problem Formulations \\for the $0/1$ Lasserre Hierarchy}~\thanks{This is the full version of the paper that was presented at ICALP 2015.}}


\author{Adam Kurpisz \and Samuli Lepp\"anen \and Monaldo Mastrolilli\\
{\small \textit{Dalle Molle Institute for Artificial Intelligence Research (IDSIA),}}\\ 
{\small \textit{6928 Manno, Switzerland,}}\\ 
{\small \textit{\{adam,samuli,monaldo\}@idsia.ch}}
}
\date{}
\maketitle
\begin{abstract}
The Lasserre/Sum-of-Squares (SoS) hierarchy is a systematic procedure for constructing a sequence of increasingly tight semidefinite relaxations. It is known that the hierarchy converges to the 0/1 polytope in $n$ levels and captures the convex relaxations used in the best available approximation algorithms for a wide variety of optimization problems.

In this paper we characterize the set of 0/1 integer linear problems and unconstrained 0/1 polynomial optimization problems that can still have an integrality gap at level $n-1$. These problems are  the hardest for the Lasserre hierarchy in this sense.


\end{abstract}

\section{Introduction}
The \emph{Sum of Squares} (SoS) proof system introduced by Grigoriev and Vorobjov~\cite{GrigorievV01} is a proof system based on the \emph{Positivstellensatz}. Shor~\cite{schor87}, Nesterov~\cite{Nesterov00}, Parrilo~\cite{parrilo00} and Lasserre~\cite{Lasserre01} show that it can be efficiently automatized using semidefinite programming (SDP) such that any $n$-variable degree-$d$ proof can be found in time $n^{O(d)}$. The SDP, often called the Lasserre/SoS\footnote{For brevity, we will interchange Lasserre hierarchy with SoS hierarchy since they are essentially the same in our context.} hierarchy, is the dual of the SoS proof system, meaning that the Lasserre hierarchy value at ``level d/2'' of an optimization problem is equal to the best provable bound using a degree-$d$ SoS proof (see the monograph by Laurent~\cite{laurent09}). For a brief history of the different formulations from \cite{GrigorievV01}, \cite{Lasserre01}, \cite{parrilo00} and the relations between them and results in real algebraic geometry we refer the reader to \cite{ODonnellZ13}.

The Lasserre hierarchy can be seen as a systematic procedure to strengthen a relaxation of an optimization problem by constructing a sequence of increasingly tight SDP relaxations. The tightness of the relaxation is parametrized by its \emph{level} or \emph{round}, which corresponds to the degree of the proof in the proof system. 
Moreover,
it captures the convex relaxations used in the best available approximation algorithms for a wide variety of optimization problems. For example, the first round of the hierarchy for the \textsc{Independent Set} problem implies the Lov\'{a}sz $\theta$-function~\cite{Lovasz79} and for the \textsc{Max Cut} problem it gives the Goemans-Williamson relaxation~\cite{GoemansW95}. The ARV relaxation of the \textsc{Sparsest Cut} \cite{AroraRV09} problem is no stronger than the relaxation given in the third round of the Lasserre hierarchy, and the subexponential time algorithm for \textsc{Unique Games}~\cite{AroraBS10} is implied by a sublinear number of rounds~\cite{BarakRS11,GuruswamiS11}. More recently, it has been shown that $O(1)$ levels of the Lasserre hierarchy is equivalent in power to any polynomial size SDP extended formulation in approximating maximum constraint satisfaction problems \cite{LeeRagSteu15}.
Other approximation guarantees that arise from the first $O(1)$ levels of the Lasserre (or weaker) hierarchy can be found in~\cite{BarakRS11,BateniCG09,Chlamtac07,ChlamtacS08,DBLP:conf/soda/CyganGM13,VegaK07,GuruswamiS11,MagenM09,RaghavendraT12}. For a more detailed overview on the use of hierarchies in approximation algorithms, see the surveys~\cite{Chla12,Laurent03,laurent09}.



The limitations of the Lasserre hierarchy have also been studied.
Most of the known lower bounds for the hierarchy originated in the works of Grigoriev~\cite{Grigoriev01,Grigoriev01b} (also independently rediscovered later by Schoenebeck~\cite{Schoenebeck08}).
In \cite{Grigoriev01b} it is shown that random 3XOR or 3SAT instances cannot be solved by even $\Omega(n)$ rounds of SoS hierarchy. Lower bounds, such as those of~\cite{BhaskaraCVGZ12,Tulsiani09} rely on \cite{Grigoriev01b,Schoenebeck08} plus gadget reductions. For a different technique to obtain lower bounds, see the recent paper \cite{BarakCK15}.

A particular weakness of the hierarchy revolves around the fact that it has hard time reasoning about terms of the form $x_1 + ... + x_n$ using the fact that all $x_i$'s are 0/1.
Grigoriev~\cite{Grigoriev01} showed that $\lfloor n/2 \rfloor$ levels of Lasserre are needed to prove that the polytope $\{x\in [0,1]^n| \sum_{i=1}^n x_i= \lfloor n/2 \rfloor+1/2\}$ contains no integer point. A simplified proof can be found in~\cite{GrigorievHP02}.

 In~\cite{Cheung07} Cheung considered a simple instance of the \textsc{Min Knapsack} problem, i.e. the minimization of $\sum_{i=1}^n x_i$ for 0/1 variables such that $\sum_{i=1}^n x_i \geq \delta(n)$, for some $\delta(n) <1$ that depends on $n$. Cheung proved that the Lasserre hierarchy requires $n$ levels to converge to the integral polytope. This is shown by providing a feasible solution at level $n-1$ of value $\frac{n}{n+1}$, whereas the smallest integral solution has value $1$. This gives an integrality gap\footnote{The \emph{integrality gap} is defined to be the measure of the quality of the relaxation described by the ratio between the optimal integral value and the relaxed optimal value. If this ratio is different from 1 we will say that ``there is an integrality gap''.}
 of $1+\frac{1}{n}$ that vanishes with~$n$.

We emphasize that the main interest in the work of Cheung revolves around understanding how fast the Lasserre hierarchy converges to the integral polytope and not how fast the integrality gap reduces, therefore not ruling out the possibility that the integrality gap might decrease slowly with the number of levels.
This is conceptually an important difference.
For the $\textsc{Max Knapsack}$ (or $\textsc{Min Knapsack}$) problem the presence of an integrality gap at some ``large'' level $t(n)$, that depends on $n$, is promptly implied by $P\not= NP$, whereas the existence of a ``large'' integrality gap at some ``large'' level $t(n)$ is not immediately clear (since both $\textsc{Max Knapsack}$ and $\textsc{Min Knapsack}$ problems admit an FPTAS).
With this regard, note that Cheung's result also implies that for the \textsc{Max Knapsack} the Lasserre hierarchy requires $n$ levels to converge to the integral polytope. However, in~\cite{KarlinMN11} it is shown that only $O(1/\eps)$ levels are needed to obtain an integrality gap of $1-\eps$, for any arbitrarily small constant $\eps>0$. It is also worth pointing out that currently the Cheung knapsack result~\cite{Cheung07} is the only known integrality gap result for Lasserre/Sum-of-Squares hierarchy at level $n-1$.

%
%
\paragraph{Our results.}
With $n$ variables, the $n$-th level of the Lasserre hierarchy is sufficient to obtain the $0/1$ polytope, where the only feasible solutions are convex combinations of feasible integral solutions~\cite{Lasserre01}.
This can be proved by using the \emph{canonical lifting lemma} (see Laurent~\cite{Laurent03}), where the feasibility of a solution to the Lasserre relaxation at level $n$ reduces to showing that a certain diagonal matrix is positive semidefinite (PSD).

The main challenge in analyzing integrality gap instances at level smaller than $n$ is showing that a candidate solution satisfies the positive semidefinite constraints.
In this paper, we first show that the feasibility of a solution to the Lasserre relaxation at level $n-1$ reduces to showing that a matrix differing from a diagonal matrix by a rank one matrix (almost diagonal form) is PSD. We analyze the eigenvalues of the almost diagonal matrices and obtain compact
necessary and sufficient conditions for the existence of an integrality gap of the Lasserre relaxation at level $n-1$. This result can be seen as the opposite of~\cite{GouveiaPT13} where they consider the case when the first order Lasserre relaxation is exact.

Interestingly, for 0/1 integer linear programs the existence of a gap at level $n-1$ implies that the problem formulation contains only constraints of the form we call \emph{Single Vertex Cutting} (SVC). An SVC constraint only excludes one vertex of the $\{0,1\}^n$ hypercube. It can thus be seen as the most generic non-trivial form of constraint, since the feasible set of any integer linear program can be modeled using only constraints of this form. 

This characterization allows us to show that $n$ levels of Lasserre are needed to prove that a polytope defined by (exponentially many) SVC constraints contains no integer point.
No other example of this kind was known at level $n$ (the previously known example in~\cite{Grigoriev01} requires $\lfloor n/2 \rfloor$ levels).


One problem where SVC constraints can arise naturally is the \textsc{Knapsack} problem. By applying the computed conditions, we improve the Cheung~\cite{Cheung07} \textsc{Min Knapsack} integrality gap of the Lasserre relaxation at level $n-1$ from $1+1/n$ to any arbitrary large number. 
This shows a substantial difference between the \textsc{Min Knapsack} and the \textsc{Max Knapsack} when we take into consideration the integrality gap size of the Lasserre relaxation.

Furthermore, we show that a similar result holds beyond the class of integer linear programs. More precisely, we show that any unconstrained 0/1 polynomial optimization problem exhibiting an integrality gap at level $n-1$ of the Lasserre relaxation has necessarily an objective function given by a polynomial of degree~$n$. This rules out the existence of any integrality gap at level $n-1$ for any $k$-ary boolean constraint satisfaction problem with $k<n$. Finally, we provide an example of an unconstrained 0/1 polynomial optimization problem with an integrality gap at level $n-1$ of the Lasserre hierarchy, 
and discuss why the problem can be seen 
as a constraint satisfaction version of an SVC constraint. Our result complements the recent paper~\cite{FawziSaundersonParrilo15} where it is shown that the Lasserre relaxation 
does not have any gap 
at level $\lceil \frac{n}{2} \rceil$ when optimizing 
$n$-variate 
0/1 polynomials of degree 2.



\section{The Lasserre Hierarchy}
In this section we provide a definition of the Lasserre hierarchy~\cite{Lasserre01}.
For the applications that we have in mind, we restrict our discussion to optimization problems with $0/1$-variables and linear constraints.
More precisely, we consider the following general optimization problem $\mathbb{P}$: Given a multilinear polynomial $f:\{0,1\}^n\rightarrow \mathbb{R}$

\begin{equation}\label{eq:polyproblem}
\mathbb{P}: \quad \min\{f(x)| x\in\{0,1\}^n,  g_{\ell}(x)\geq 0 \text{ for } \ell\in [m]\}
\end{equation}
where $\{ g_{\ell}(x): \ell\in [m]\}$ are linear functions of $x$.

Many basic optimization problems are special cases of $\mathbb{P}$. For example, any $k$-ary boolean constraint satisfaction problem, such as \textsc{Max Cut}, is captured by~\eqref{eq:polyproblem} where a degree $k$ function $f(x)$ counts the number of satisfied constraints, and no linear constraints $ g_{\ell}(x)\geq 0$ are present. Also any $0/1$ integer linear program is a special case of~\eqref{eq:polyproblem}, where $f(x)$ is a linear function.


Lasserre~\cite{Lasserre01} proposed a hierarchy of SDP relaxations for increasing $\delta$,
\small{
\begin{equation}\label{eq:lass1}
 \min\{L(f)| L: \mathbb{R}[X]_{2\delta}\rightarrow \mathbb{R},  L(1)=1, \text{ and } L(u^2), L(u^2 g_{\ell})\geq 0,  \forall \text{ polynomial } u \}
\end{equation}
where $L: \mathbb{R}[X]_{2\delta}\rightarrow \mathbb{R}$ is a linear map with $\mathbb{R}[X]_{2\delta}$ denoting the ring $\mathbb{R}[X]$ restricted to polynomials of degree at most $2\delta$.\footnote{In \cite{BarakBHKSZ12}, $L(p)$ is written $\tilde{ \mathbb{E}}[p]$ and called the ``pseudo-expectation'' of $p$.} In particular for $0/1$ problems $L$ vanishes on the truncated ideal generated by $x_i^2-x_i$. Note that~\eqref{eq:lass1} is a relaxation since one can take $L$ to be the evaluation map $f\rightarrow f(x^*)$ for any optimal solution $x^*$.

Relaxation~\eqref{eq:lass1} can be equivalently formulated in terms of \emph{moment matrices}~\cite{Lasserre01}. In the context of this paper, this matrix point of view is more convenient to use and it is described below. In our notation we mainly follow the survey of Laurent~\cite{Laurent03} (see also \cite{Rot13}).

\paragraph{Variables and Moment Matrix.} Throughout this paper, vectors are written as columns. Let $N$ denote the set $\{1,\ldots,n\}$. The collection of all subsets of $N$ is denoted by $\PS(N)$. For any integer $t\geq 0$, let $\PS_t(N)$ denote the collection of subsets of $N$ having cardinality at most~$t$.
Let $y\in \mathbb{R}^{\PS(N)}$. For any nonnegative integer $t\leq n$, let $M_t(y)$ denote the matrix with $(I,J)$-entry $y_{I\cup J}$ for all $I,J\in \PS_t(N)$. Matrix $M_t(y)$ is termed in the following as the \emph{t-moment matrix} of $y$. For a linear function $g(x) = \sum_{i=1}^n g_{i} \cdot x_i + g_0$, we define $g*y$ as a vector, often called \emph{shift operator}, where the $I$-th entry is $(g*y)_I=\sum_{i=1}^n g_i y_{I\cup\{i\}} + g_0 y_I$. Let $f$ denote the vector of coefficients of polynomial $f(x)$ (where $f_I$ is the coefficient of monomial $\Pi_{i\in I}x_i$ in $f(x)$).


\begin{definition}\label{lassDef}
The Lasserre relaxation of problem~\eqref{eq:polyproblem} at the $t$-th level, denoted as $\LAS_t(\mathbb{P})$, is the following

\begin{equation}
\LAS_t(\mathbb{P}):\quad \min\left\{\sum_{I \subseteq N} f_I y_I | y\in \mathbb{R}^{\PS_{2t+2d}(N)} \text{ and } y\in \mathbb{M} \right\}
\end{equation}
where $\mathbb{M}$ is the set of vectors $y\in \mathbb{R}^{\PS_{2t+2d}(N)}$ that satisfy the following PSD conditions
\begin{eqnarray}
y_{\varnothing}&=&1  \\
M_{t+d}(y)&\succeq& 0  \\
M_{t}( g_{\ell}*y )&\succeq& 0 \qquad \ell\in [m]
\end{eqnarray}
where $d=0$ if $m=0$ (no linear constraints) otherwise $d=1$.
\end{definition}

We will use the following known facts (see e.g.~\cite{Laurent03,Rot13}).
Consider any vector $w\in\mathbb{R}^{\PS(N)}$ (vector $w$ is intended to be either the vector $y\in\mathbb{R}^{\PS(N)}$ of variables or the shifted vector $g*y$ for any $g\in\mathbb{R}^{\PS(N)}$).
For any $I\in \PS(N)$, variables $\{w_{I}^N:I\subseteq N\}$ are defined as follows:
\begin{equation*} \label{eq:probevent}
w_{I}^N: =\sum_{H\subseteq N\setminus I} (-1)^{|H|} w_{H\cup I}
\end{equation*}
Note that $w_I = \sum_{I\subseteq J} w_J^N$ (by using inclusion-exclusion principle, see~\cite{Rot13}). The latter with $y_{\emptyset}=1$ implies that $\sum_{J\subseteq N} y_J^N =1$, and that the objective function can be rewritten as follows:
$$\sum_{I \subseteq N} f_I y_I=\sum_{I\subseteq  N} f(x_I) y_I^N$$
where $f(x_I)$ denotes the value of $f(x)$ when $x_i=1$ for $i\in I$ and $x_i=0$ for $i\not\in I$.

Congruent transformations are known not to change the sign of the eigenvalues (see e.g. \cite{HornJohnson03}). It follows that in studying the positive-semidefiniteness of matrices we can focus on congruent matrices without loss of generality.
Let $D_{t}(w)$ denote the diagonal matrix in $\mathbb{R}^{\PS_{t}(N)\times \PS_{t}(N)}$ with $(I,I)$-entry equal to $w_I^N$ for all $I\in \PS_{t}(N)$.
\begin{lemma}\cite{Laurent03}\label{prob}
Matrix $M_{n}(w)$ is congruent
to the diagonal matrix
$
D_{n}(w)
$.
\end{lemma}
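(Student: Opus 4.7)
The plan is to realize the congruence via the Möbius/zeta transform on the subset lattice. The identity already derived in the text, $w_I = \sum_{I \subseteq J} w_J^N$, is exactly Möbius inversion of the definition $w_I^N = \sum_{H \subseteq N\setminus I} (-1)^{|H|} w_{I \cup H}$, so the zeta matrix is the natural candidate for the transforming matrix.

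Concretely, I would define $Z \in \mathbb{R}^{\PS(N) \times \PS(N)}$ by $Z_{K,I} = 1$ if $I \subseteq K$ and $0$ otherwise. The first step is to write out the $(I,J)$-entry of $M_n(w)$ using the fact that $M_n(w)_{I,J} = w_{I \cup J}$ and to apply the identity $w_{I\cup J} = \sum_{K \supseteq I\cup J} w_K^N$, and then to observe that the condition $K \supseteq I \cup J$ is equivalent to $[I\subseteq K]\cdot[J\subseteq K] = 1$. This immediately yields
\[
M_n(w)_{I,J} = \sum_{K \in \PS(N)} Z_{K,I}\, w_K^N\, Z_{K,J} = (Z^\top D_n(w) Z)_{I,J},
\]
so $M_n(w) = Z^\top D_n(w) Z$.

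The second step is to verify that $Z$ is invertible, which gives the congruence. Order $\PS(N)$ by any linear extension of inclusion (for instance, by increasing cardinality and then lexicographically). In this ordering $Z$ is upper (or lower, depending on the convention) triangular with ones on the diagonal, hence has determinant $1$; equivalently, its inverse is the Möbius matrix with entries $(-1)^{|K\setminus I|}[I\subseteq K]$, which is exactly the linear transformation underlying the definition of $w_I^N$ from $w$.

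I do not foresee a real obstacle: the entire content is bookkeeping on the subset lattice, and the only thing to get right is the convention for row/column indexing of $Z$ so that the triple product $Z^\top D_n(w) Z$ indexes into $D_n(w)$ by the common super-set $K$ of $I$ and $J$. If a fully self-contained derivation is desired, one can additionally state the Möbius inversion lemma explicitly (showing $\sum_{I\subseteq K\subseteq J}(-1)^{|K\setminus I|} = [I=J]$) to justify that the Möbius matrix is indeed the inverse of $Z$, but this is standard and no calculation beyond the display above is needed.
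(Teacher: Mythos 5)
Your proof is correct and follows essentially the same route the paper uses: the paper cites this lemma from Laurent, but its own proof of the analogous Lemma~\ref{almostdiag} for $M_{n-1}(w)$ relies on exactly the same zeta-matrix factorization $M(w)=Z D Z^{\top}$ (up to the transposition convention for $Z$) together with invertibility of the triangular zeta matrix; at level $n$ the correction term $w_N^N\mathbb{J}$ simply disappears because the index set $\PS_n(N)$ already contains $N$. The indexing bookkeeping in your display is right, so nothing further is needed.
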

By Lemma~\ref{prob}, $M_n(y)\succeq 0$ implies that the variables in $\{y_I^N:I\subseteq N\}$ can be interpreted as a probability distribution (see~\cite{Laurent03,Rot13}), where $y_I^N$ is the probability that the variables with index in $I$ are set to one and the remaining to zero.
\begin{lemma}\cite{Laurent03}\label{th:constentry}
For any polynomial $g$ of degree at most one, $y\in \mathbb{R}^{\PS(N)}$ and $z = g*y$ we have
$
z^N_I = g(x_I)\cdot y^N_I
$
where $g(x_I) = \sum_{i\in I} a_i + b$. 
\end{lemma}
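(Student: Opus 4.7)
The plan is to unfold the definition of $z^N_I$ and reduce everything to elementary manipulations of signed sums over subsets of $N\setminus I$. Write $g(x)=\sum_{i=1}^n a_i x_i + b$ so that, by the definition of the shift operator, $z_J=(g*y)_J=\sum_{i=1}^n a_i y_{J\cup\{i\}} + b\, y_J$ for every $J\subseteq N$. Then expand
\[
z^N_I \;=\; \sum_{H\subseteq N\setminus I}(-1)^{|H|}z_{H\cup I} \;=\; b\sum_{H\subseteq N\setminus I}(-1)^{|H|}y_{H\cup I} \;+\; \sum_{i=1}^n a_i \sum_{H\subseteq N\setminus I}(-1)^{|H|}y_{H\cup I\cup\{i\}}.
\]
The first term is exactly $b\, y^N_I$ by definition. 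For the second term I would split the outer sum into the contributions coming from $i\in I$ and $i\notin I$.

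For $i\in I$ the index $\{i\}$ is absorbed, giving $y_{H\cup I\cup\{i\}}=y_{H\cup I}$, so the inner sum is again $y^N_I$; this contributes $\bigl(\sum_{i\in I} a_i\bigr)y^N_I$. The key step — and the only real obstacle — is showing that each term with $i\notin I$ vanishes. Here I would reindex the sum via the partition of subsets $H\subseteq N\setminus I$ according to whether $i\in H$: writing $J:=I\cup\{i\}$, sets $H$ with $i\notin H$ are exactly subsets of $N\setminus J$, while sets with $i\in H$ are of the form $H'\cup\{i\}$ for $H'\subseteq N\setminus J$, with $|H|=|H'|+1$. In both cases $H\cup I\cup\{i\}=H\cup J$ after removing the absorbed $\{i\}$, so the two pieces give
\[
\sum_{H'\subseteq N\setminus J}(-1)^{|H'|}y_{H'\cup J} \;+\; \sum_{H'\subseteq N\setminus J}(-1)^{|H'|+1}y_{H'\cup J} \;=\; 0,
\]
i.e.\ the contributions of $H$ and $H\triangle\{i\}$ cancel in pairs.

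Combining the three contributions yields $z^N_I=\bigl(\sum_{i\in I} a_i + b\bigr)y^N_I = g(x_I)\cdot y^N_I$, which is exactly the claim. The whole argument is purely a computation with signed sums; the only subtlety is recognising the pairing on $H\mapsto H\triangle\{i\}$ that kills the off-support terms, and this pairing works precisely because $g$ has degree at most one (so that only a single extra index $\{i\}$ is introduced by the shift operator).
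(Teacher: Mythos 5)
Your proof is correct: the paper states this lemma without proof (citing Laurent), and your direct expansion of $z^N_I$ via the shift operator, with the pairing $H\mapsto H\triangle\{i\}$ killing the terms with $i\notin I$, is exactly the standard verification one finds in that reference. Nothing is missing.
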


Note that, by using Lemma~\ref{prob} and Lemma~\ref{th:constentry}, it can be easily shown the well known fact that at level $n$ any solution can be written as a convex combination of feasible integral solutions. The latter implies that any integrality gap vanishes at level $n$.

\section{The $(n-1)$-Moment Matrix}
In the following we show that $M_{n-1}(w)$ is congruent to the diagonal matrix $D_{n-1}(w)$ perturbed by a rank one matrix, and analyze its eigenvalues.
For ease of notation, we will use $D$ to denote $D_{n-1}(w)$ throughout this section.

%
\begin{lemma}\label{almostdiag}
Matrix $M_{n-1}(w)$ is congruent to the matrix
$
D + w_N^N \cdot vv^{\top}
$, 
 where $v$ is a $|\PS_{n-1}(N)|$-dimensional vector with $v_I= (-1)^{n+1-|I|}$ for any $I\in \PS_{n-1}(N)$.
\end{lemma}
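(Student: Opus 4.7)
The plan is to decompose $M_{n-1}(w)$ via the Möbius/zeta machinery of the Boolean poset $\PS_{n-1}(N)$, exactly as in the proof of Lemma~\ref{prob}, but accounting for the fact that, compared to $\PS_n(N)$, the single index $K=N$ is missing. This missing index is exactly what will produce the rank-one correction.

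First, apply the inclusion--exclusion identity $w_{I\cup J}=\sum_{K:\,I\cup J\subseteq K\subseteq N}w_K^N$ (a direct consequence of the identity $w_I=\sum_{I\subseteq J}w_J^N$ noted in the preliminaries) to each entry of $M_{n-1}(w)$, and isolate the unique term with $K=N$:
\[
M_{n-1}(w)_{I,J}\;=\;w_N^N\;+\;\sum_{K\in\PS_{n-1}(N),\;I\cup J\subseteq K}w_K^N.
\]
Let $\bar Z$ be the $\PS_{n-1}(N)\times\PS_{n-1}(N)$ zeta matrix with $\bar Z_{I,K}=1$ iff $I\subseteq K$, and let $\mathbf{1}$ denote the all-ones vector on $\PS_{n-1}(N)$. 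The identity above then rewrites, as matrices, as
\[
M_{n-1}(w)\;=\;\bar Z\,D\,\bar Z^{\top}\;+\;w_N^N\,\mathbf{1}\mathbf{1}^{\top}.
\]

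Next, use that $\bar Z$ is invertible, with the explicit Möbius inverse $(\bar Z^{-1})_{I,K}=(-1)^{|K\setminus I|}$ for $I\subseteq K$ and $0$ otherwise, and apply the congruence $A\mapsto \bar Z^{-1}A\bar Z^{-\top}$ to both sides:
\[
\bar Z^{-1}M_{n-1}(w)\bar Z^{-\top}\;=\;D\;+\;w_N^N\,(\bar Z^{-1}\mathbf{1})(\bar Z^{-1}\mathbf{1})^{\top}.
\]
It remains to check $\bar Z^{-1}\mathbf{1}=v$. For any $I\in\PS_{n-1}(N)$,
\[
(\bar Z^{-1}\mathbf{1})_I\;=\;\sum_{K\in\PS_{n-1}(N),\;I\subseteq K}(-1)^{|K\setminus I|}.
\]
Since $I\subsetneq N$, the full sum $\sum_{K:\,I\subseteq K\subseteq N}(-1)^{|K\setminus I|}$ equals $0$ (binomial identity on the non-empty set $N\setminus I$); subtracting the single missing term $K=N$, whose value is $(-1)^{n-|I|}$, yields $(\bar Z^{-1}\mathbf{1})_I=(-1)^{n+1-|I|}=v_I$, giving the desired congruence.

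The proof is essentially bookkeeping: the only conceptual step is recognizing that passing from $M_n(w)$ to $M_{n-1}(w)$ amounts to dropping the single poset element $K=N$, so that the diagonalization underlying Lemma~\ref{prob} breaks down in a perfectly controlled way—by a rank-one perturbation whose coefficient is precisely $w_N^N$. The only place requiring attention is the sign bookkeeping in the Möbius computation for $v_I$, which is a one-line parity argument.
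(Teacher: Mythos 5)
Your proof is correct and follows essentially the same route as the paper: decompose $M_{n-1}(w)$ as $Z_{n-1}DZ_{n-1}^{\top}+w_N^N\mathbb{J}$ using the zeta/M\"obius matrices of $\PS_{n-1}(N)$ and apply the congruence by $Z_{n-1}^{-1}$. The only difference is that you explicitly verify $Z_{n-1}^{-1}e=v$ via the binomial/parity identity, a step the paper leaves implicit.
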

\begin{proof}
Let $Z_{n-1}$ denote the \emph{zeta matrix} of the lattice $\PS_{n-1}(N)$, that is the square $0$-$1$ matrix indexed by $\PS_{n-1}(N)$ such that $[Z_{n-1}]_{I,J}=1$ if and only if $I \subseteq J$.
%
%
This matrix is known to be invertible (note that it is upper triangular with unit diagonal entries) and the inverse is known as the M\"{o}bius matrix of $\mathcal{P}_{n-1}(N)$ whose entries are defined as follows:
\begin{equation}\label{mobiusmatrix}
\left[Z_{n-1}^{-1}\right ]_{I,J}=
\left\{
\begin{array}{ll}
(-1)^{|J\setminus I |}   & \text{if } I \subseteq J,\\
0 & \text{otherwise}.
\end{array} \right.
\end{equation}
Since $w_I = \sum_{I\subseteq J} w_J^N$ by direct inspection we have that
\begin{eqnarray*}
M_{n-1}(w)= Z_{n-1}D Z_{n-1}^{\top} + w_N^N \cdot \mathbb{J}
\end{eqnarray*}
where $\mathbb{J}$ is the all-ones matrix. By multiplying both sides by the M\"{o}bius matrix we obtain that
$M_{n-1}(w)$ is congruent to $D + w_N^N \cdot (Z_{n-1}^{-1}e) (Z_{n-1}^{-1} e)^{\top}$, where $e$ is the vector of all-ones, and the claim follows.

\end{proof}

\subsection{Positive semidefiniteness of $M_{n-1}(y)$}

In this section we derive the necessary and sufficient conditions for $M_{n-1}(w) \succeq 0$. From Lemma~\ref{almostdiag} we have that
$
M_{n-1}(y) \succeq 0 \Leftrightarrow D + w_N^N vv^{\top} \succeq 0
$, 
where $vv^{\top}$ is a rank one matrix with entries $\pm 1$. 
%
\begin{lemma} \label{lemma:PSD_condition}
If $w_N^N \neq 0$ then, for any $I\subseteq N$, $\lambda = w_I^N$ is an eigenvalue of the matrix $D + w_N^N vv^\top$ if and only if there is another $J\not = I$ with $w_I^N=w_J^N$, $J\subseteq N$; The remaining eigenvalues are the solutions $\lambda$ of the following equation
\begin{equation} \label{eq:essential_condition}
\sum_{N\neq I \subseteq N} \frac{1}{\lambda - w_I^N} = \frac{1}{w_N^N}
\end{equation}

\end{lemma}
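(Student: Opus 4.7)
The plan is to compute the characteristic polynomial of $A := D + w_N^N vv^\top$ in closed form using the matrix determinant lemma, and then read off the two types of eigenvalues by inspection. Note that the index set of the matrix is $\PS_{n-1}(N) = \{I \subseteq N : I \neq N\}$ (treating $I$ as a proper subset of $N$ in what follows), and that each entry $v_I = \pm 1$ so $v_I^2 = 1$.

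First I would write, using the matrix determinant (Cauchy) lemma applied to the rank-one update of $D - \lambda I$,
\[
\det(A - \lambda I) \;=\; \det(D - \lambda I)\Bigl(1 + w_N^N \cdot v^\top (D - \lambda I)^{-1} v\Bigr)
\]
whenever $D - \lambda I$ is invertible. Since $D$ is diagonal with entries $w_I^N$ and $v_I^2 = 1$, this simplifies to
\[
p(\lambda) \;=\; \prod_{I \neq N}(w_I^N - \lambda) \;+\; w_N^N \sum_{J \neq N} \prod_{\substack{I \neq N \\ I \neq J}}(w_I^N - \lambda),
\]
and by continuity this polynomial identity extends to all $\lambda$, giving the characteristic polynomial of $A$.

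Next I would plug $\lambda = w_K^N$ into $p(\lambda)$ to determine when it is a root. The first product vanishes (its $I = K$ factor is zero), and every summand in the second sum with $J \neq K$ also vanishes because the remaining product still contains the factor $(w_K^N - w_K^N)$. Only the $J = K$ summand survives, yielding
\[
p(w_K^N) \;=\; w_N^N \prod_{\substack{I \neq N \\ I \neq K}} (w_I^N - w_K^N).
\]
Since $w_N^N \neq 0$ by assumption, this is zero exactly when some index $I \neq K$ (with $I \neq N$) satisfies $w_I^N = w_K^N$, giving the first half of the lemma.

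Finally, for $\lambda \notin \{w_I^N : I \neq N\}$, the factor $\det(D - \lambda I) \neq 0$, so $p(\lambda) = 0$ forces
\[
1 + w_N^N \sum_{I \neq N} \frac{1}{w_I^N - \lambda} = 0,
\]
which rearranges to \eqref{eq:essential_condition}. The arguments above together account for every eigenvalue. The main bookkeeping obstacle is tracking that the index $I$ ranges over $\PS_{n-1}(N)$, i.e.\ proper subsets of $N$, so that the pole structure of the secular equation involves precisely the diagonal entries of $D$ and no others; once this is set up correctly, the rest is a direct computation from the matrix determinant lemma.
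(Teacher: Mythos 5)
Your proof is correct and follows essentially the same route as the paper: both rest on Cauchy's rank-one-perturbation determinant formula applied to $D-\lambda I$, identifying eigenvalues that coincide with repeated diagonal entries and reducing the remaining ones to the secular equation \eqref{eq:essential_condition}. The only cosmetic difference is that you expand the determinant into an explicit polynomial $p(\lambda)$ and evaluate it at $\lambda=w_K^N$, whereas the paper performs the equivalent case analysis via the adjugate of $D_\lambda$.
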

\begin{proof}
Consider the zeroes $\lambda$ of the characteristic polynomial of $D + w_N^N vv^\top$:
\begin{equation} \label{eq:characteristic_polynomial}
\det(\lambda I - (D+w_N^N vv^\top)) = \det(D_\lambda - w_N^N vv^\top) = 0
\end{equation}
where $D_{\lambda}=\lambda I -D$. Applying Cauchy's formula for the determinant of a rank-one pertubation \cite[p.~26]{HornJohnson03} we can write this as
\begin{equation} \label{eq:adjoint_equality}
\det(D_\lambda) - w_N^N v^\top\adjugate(D_\lambda) v = 0
\end{equation}
Consider a solution $\lambda$ to \eqref{eq:adjoint_equality}. Exactly one of the following three cases must hold:
\begin{enumerate}
 \item $D_\lambda$ is nonsingular, meaning that $\lambda \neq w_I^N$ for all $N\neq I \subseteq N$. Then $\adjugate(D_\lambda) = (\det D_\lambda)D^{-1}_\lambda$ 
 and the above becomes
$$
\det(D_\lambda)(1 - w_N^N v^\top D^{-1}_\lambda v) = 0
$$
which simplifies to \eqref{eq:essential_condition}.
\item $D_\lambda$ is singular, and $\lambda = w_I^N$ for exactly one $N\neq I \subseteq N$. Then $\adjugate(D_\lambda) = \alpha e_Ie_I^\top$ for some nonzero $\alpha$ \cite[p.~22-23]{HornJohnson03}, where $(e_I)_J = 1$ if $I = J$ and $(e_I)_J = 0$ otherwise. Now \eqref{eq:adjoint_equality} simplifies to
$$
w_N^N v^\top(\alpha e_Ie_I^\top)v = 0
$$
which can only hold if $w_N^N = 0$. Hence such $\lambda$ cannot be a solution to \eqref{eq:characteristic_polynomial}.
\item $D_\lambda$ is singular and there are more than one $N\neq I \subseteq N$ such that $\lambda = w_I^N$. Then $\adjugate(D_\lambda) = 0$ \cite[p.~22]{HornJohnson03} and $\lambda$ is a solution to \eqref{eq:characteristic_polynomial}.
\end{enumerate}
\end{proof}

\begin{lemma} \label{lemma:necessary_and_sufficient_for_PSD}
Matrix $D + w_N^N vv^\top$ is positive-semidefinite if and only if either $w_I^N\geq 0$ for all $I\subseteq N$, or the following holds
\begin{align}
w_K^N < 0, & \qquad \textit{for exactly one } K \subseteq N,\\
w_J^N > 0, & \qquad \textit{for all } K \neq J \subseteq N, \\
\sum_{I \subseteq N} \frac{1}{w_I^N} \leq 0& \label{eq:sufficient_condition}
\end{align}
\end{lemma}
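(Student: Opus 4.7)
The plan is to split the analysis according to the sign of $w_N^N$ and to combine Lemma~\ref{lemma:PSD_condition} with Cauchy's interlacing theorem for rank-one perturbations. First I would dispose of the trivial subcase $w_N^N=0$: the matrix collapses to the diagonal $D$, which is PSD iff every $w_I^N$ is nonnegative, matching the first alternative of the lemma.

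When $w_N^N\neq 0$, interlacing already rules out most configurations. If the sorted diagonal entries of $D$ are $d_{(1)}\leq\cdots\leq d_{(m)}$, then the sorted eigenvalues $\lambda_1\leq\cdots\leq\lambda_m$ of $D+w_N^N vv^\top$ satisfy $d_{(i)}\leq\lambda_i\leq d_{(i+1)}$ when $w_N^N>0$, and $d_{(i-1)}\leq\lambda_i\leq d_{(i)}$ when $w_N^N<0$. Hence if $w_N^N>0$ and at least two $w_I^N$ with $I\neq N$ are negative, then $\lambda_1\leq d_{(2)}<0$; and if $w_N^N<0$ and any $w_I^N$ with $I\neq N$ is negative, then $\lambda_1\leq d_{(1)}<0$. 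This leaves only two PSD-compatible possibilities: (i) all $w_I^N\geq 0$, in which case $D+w_N^N vv^\top$ is a sum of two PSD matrices; or (ii) exactly one $w_K^N$ (with $K\subseteq N$) is negative and the remaining entries are nonnegative.

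In case (ii) I would first upgrade the remaining nonnegativities to strict positivity by a small-minor check. If $w_J^N=0$ for some $J\neq K$, then when $K=N$ the $(J,J)$-diagonal entry of $D+w_N^N vv^\top$ equals $w_N^N<0$, and when $K\neq N$ (forcing $w_N^N>0$) the $2\times 2$ principal minor indexed by $\{K,J\}$ has determinant $w_K^N w_N^N<0$. In either case PSDness fails, so every $w_I^N$ with $I\neq K$ must be strictly positive.

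Finally, in this surviving configuration only $\lambda_1\geq 0$ needs to be checked, since interlacing forces $\lambda_2,\ldots,\lambda_m\geq 0$ automatically. By Lemma~\ref{lemma:PSD_condition}, $\lambda_1$ is the unique root of $f(\lambda):=\sum_{I\neq N}(\lambda-w_I^N)^{-1}=1/w_N^N$ in an interval strictly containing $0$, namely $(-\infty,d_{(1)})$ when $w_N^N<0$ and $(d_{(1)},d_{(2)})$ when $w_N^N>0$. Since $f'(\lambda)=-\sum(\lambda-w_I^N)^{-2}<0$, $f$ is strictly decreasing on this interval, and so $\lambda_1\geq 0$ iff $f(0)\geq 1/w_N^N$, i.e., $-\sum_{I\neq N}1/w_I^N\geq 1/w_N^N$, which rearranges to~\eqref{eq:sufficient_condition}. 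The main delicate point is keeping the sign conventions consistent across both subcases of (ii) so that the monotonicity argument yields exactly the same inequality; the interlacing and $2\times 2$ minor observations handle everything else.
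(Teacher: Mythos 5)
Your proof is correct, and its core is the same as the paper's: reduce to the secular equation of Lemma~\ref{lemma:PSD_condition}, observe that $f(\lambda)=\sum_{N\neq I}(\lambda-w_I^N)^{-1}$ is strictly decreasing between consecutive poles, and conclude that the only possibly negative eigenvalue is nonnegative iff $f(0)\geq 1/w_N^N$, which rearranges to~\eqref{eq:sufficient_condition}. Where you diverge is in how the remaining configurations are eliminated: the paper does this by inspecting the asymptotes of $f(\lambda)$ against the horizontal line $1/w_N^N$ (its cases 1 and 2, with reference to a figure), whereas you invoke Cauchy's interlacing theorem for rank-one perturbations to kill every configuration with two or more negative diagonal entries, and explicit $2\times 2$ principal minors (correctly computed: the minor indexed by $\{K,J\}$ with $w_J^N=0$ equals $w_K^N w_N^N<0$) to upgrade the remaining entries to strict positivity. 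This buys a cleaner, figure-free elimination and isolates the single delicate computation, at the cost of importing interlacing as an extra tool; you also handle $w_N^N=0$ explicitly, which the paper leaves implicit. The one spot where you are terse is the identification of $\lambda_1$ with the unique root of $f(\lambda)=1/w_N^N$ in the designated interval: this needs interlacing \emph{together with} the observation from Lemma~\ref{lemma:PSD_condition} that the isolated value $d_{(1)}=w_K^N$ (in the case $K\neq N$) is not itself an eigenvalue, since it is not repeated. That step is true and follows from the tools you already cite, so it is a gap in exposition rather than in substance.
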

\begin{proof}
If $w_I^N\geq 0$ for all $I\subseteq N$ then $D + w_N^N vv^\top\succeq 0$ since it is the sum of two PSD matrices.
Otherwise, there exists $I\subseteq N$ with $w_I^N< 0$ and we distinguish between the following complementary cases.

If there are two different sets $N\neq I, J \subseteq N$ such that $w_I^N = w_J^N < 0$, then by Lemma \ref{lemma:PSD_condition} the matrix $D + w_N^N vv^\top$ has a negative eigenvalue. Therefore we may assume that all the negative entries of $D$ are different from each other. Then by Lemma~\ref{lemma:PSD_condition}, any potentially negative eigenvalue is given by \eqref{eq:essential_condition}. With this in mind, let $f(\lambda) = \sum_{N \neq I \subseteq N} \frac{1}{\lambda - w_I^N}$ and study the points $\lambda$ where $f(\lambda)$ intersects the line given by $\frac{1}{w^N_N}$.

 There are three cases:
 \begin{enumerate}
  \item For two sets $N \neq I,J \subseteq N$ we have $w_I^N < w_J^N \leq0$. Then since the function $f(\lambda)$ has vertical asymptotes at the points $w_I^N$ and $w_J^N$, there must be a point $\lambda < 0$ such that $f(\lambda) = \frac{1}{w_N^N}$ regardless of the value of $w^N_N$ (see Figure \ref{figure:conceptual_plot} (i)).

  \item For exactly one $N\neq I \subseteq N$ we have $w_I^N \leq 0$  and $w^N_N < 0$. Then $f(\lambda)$ has one vertical asymptote in $(-\infty, 0]$ and thus the line $\frac{1}{w^N_N}$ crosses the graph of $f(\lambda)$ at least in one $\lambda < 0$ (see Figure \ref{figure:conceptual_plot} (ii)).

  \item For exactly one $I \subseteq N$ we have $w_I^N <0 $ and the rest are strictly positive. Then we note that there can be at most one $\lambda < 0$ such that $f(\lambda) = \frac{1}{w_N^N}$. Inspecting the form of the graph shows that there is no intersection in the negative half-plane if and only if $f(0) \geq \frac{1}{w_N^N}$ (see Figure \ref{figure:conceptual_plot} (iii) and (iv) for the case $I=N$).
 \end{enumerate}
\end{proof}

\begin{figure}[h!]
\centering
\includegraphics[width=115mm]{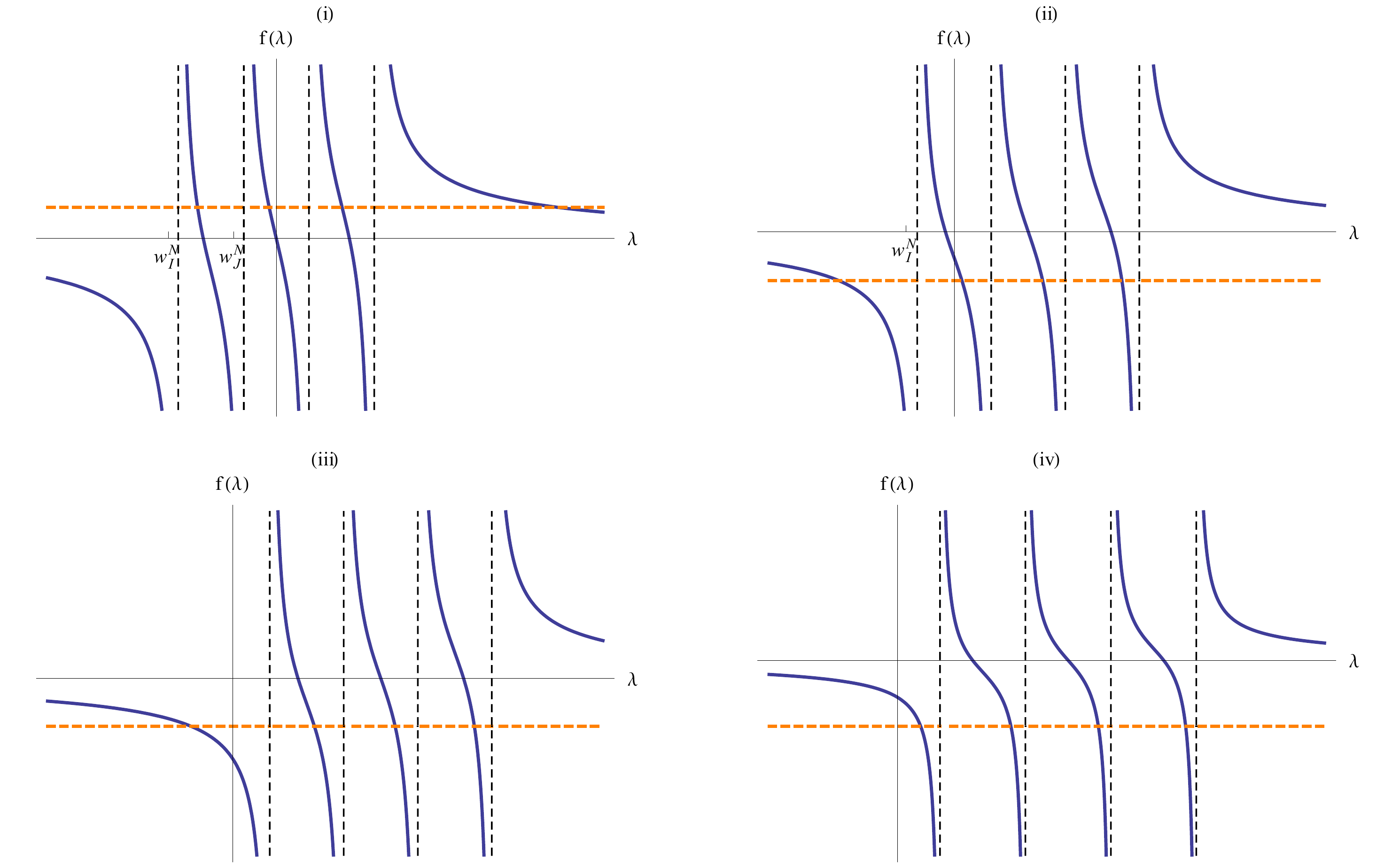}
\caption{A conceptual plot of different relevant arrangements of 
the graph of $f(\lambda)$ and the graph of $\frac{1}{w^N_N}$ (dotted lines).}
\label{figure:conceptual_plot}
\end{figure}

\section{Integrality gaps of Lasserre hierarchy at level $n-1$}

In this section we characterize the set of problems $\mathbb{P}$ of the form~\eqref{eq:polyproblem} that can have an integrality gap at level $n-1$ of the Lasserre relaxation.
In particular, we prove that in order to exhibit an integrality gap, a constrained problem can only have constraints each of which rule out only one point of the $\{0,1\}^n$ hypercube. We fully characterize what this means in the case where the constraints are linear. We also discuss two examples of problems with such constraints, and in particular, we exhibit a simple instance of the \textsc{Min Knapsack} problem that has an unbounded integrality gap. Finally, we show that if $\mathbb{P}$ is an unconstrained problem that has an integrality gap at level $n-1$, then the objective function of $\mathbb{P}$ must be a polynomial of degree $n$.
%

\subsection{Problems with linear constraints}

In this subsection we focus on $0/1$-integer linear programs $\mathbb{P}$ of the form \eqref{eq:polyproblem}. 
%
We will assume, w.l.o.g., that if constraint $g(x)\geq 0$ is satisfied by all integral points then it is redundant and no one of these redundant constraints is present. 


\begin{theorem} \label{th:gap_theorem}
Let $\mathbb{P}$ be a $0/1$-integer linear program of the form \eqref{eq:polyproblem}. 
The Lasserre relaxation $\LAS_{n-1}(\mathbb{P})$ has an integrality gap if and only if there exists a solution $\{y_I^N|I\subseteq N\}$ that satisfies the following conditions:
\begin{eqnarray} 
 y_I^N &>& 0 \qquad \text{for all } I\subseteq N,\label{posval}\\
\sum_{I \subseteq N} y_I^N &=& 1 \label{eq:gap_thm_sum_condition1}\\
g_\ell(x_{K_{\ell}})y_{K_{\ell}}^N &<& 0  \qquad  \text{ for exactly one } K_{\ell} \subseteq N \text{ for each } \ell\in [m], \label{eq:gap_thm_violated_constraint} \\
  g_\ell(x_J)y_J^N &>& 0 \qquad \text{for all } \ell\in [m], \text{ for all } K_{\ell} \neq J \subseteq N, \label{eq:gap_thm_not_violated_constraint}\\
  \sum_{I \subseteq N} \frac{1}{g_\ell(x_I)y_I^N} &\leq& 0 \qquad \text{for all } \ell\in [m], \label{eq:gap_thm_constrained_psd}\\
  \sum_{I \subseteq N} y_I^N f(x_I) &<& f(x_{I^*})   \label{eq:gap_theorem_gap_condition}
\end{eqnarray}
where $f(x_{I^*})$ is a minimal integral feasible solution.
\end{theorem}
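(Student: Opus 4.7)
The plan is to translate each positive semidefiniteness constraint defining $\LAS_{n-1}(\mathbb{P})$ into an explicit condition on the M\"obius-transformed variables $\{y_I^N:I\subseteq N\}$, and then to apply Lemma~\ref{lemma:necessary_and_sufficient_for_PSD} constraint by constraint. By Lemma~\ref{prob}, $M_n(y)$ is congruent to the diagonal matrix $D_n(y)$ whose entries are $y_I^N$; by Lemma~\ref{almostdiag}, $M_{n-1}(g_\ell*y)$ is congruent to the almost-diagonal matrix $D_{n-1}(g_\ell*y)+(g_\ell*y)_N^N vv^{\top}$, whose diagonal entries equal $g_\ell(x_I)y_I^N$ by Lemma~\ref{th:constentry}. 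Under this dictionary, feasibility of $y$ in $\LAS_{n-1}(\mathbb{P})$ is equivalent to $y_I^N\geq 0$ for all $I$ together with the PSD criterion of Lemma~\ref{lemma:necessary_and_sufficient_for_PSD} applied to each shifted sequence $g_\ell*y$.

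For the \emph{if} direction, I would start from a family $\{y_I^N\}$ satisfying \eqref{posval}--\eqref{eq:gap_theorem_gap_condition} and reconstruct $y$ by $y_I=\sum_{I\subseteq J}y_J^N$. Then \eqref{posval} yields $M_n(y)\succeq 0$ via Lemma~\ref{prob}; \eqref{eq:gap_thm_sum_condition1} gives $y_\emptyset=1$; and for each $\ell$, conditions \eqref{eq:gap_thm_violated_constraint}--\eqref{eq:gap_thm_constrained_psd} are precisely the ``exactly one strictly negative entry'' regime of Lemma~\ref{lemma:necessary_and_sufficient_for_PSD}, so that $M_{n-1}(g_\ell*y)\succeq 0$. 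Using the identity $\sum_I f_I y_I=\sum_I f(x_I) y_I^N$ from the preliminaries, \eqref{eq:gap_theorem_gap_condition} then exhibits a strict integrality gap.

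For the \emph{only if} direction, I would take a feasible $y$ realising the gap and define $y_I^N$ by M\"obius inversion. The conditions $y_I^N\geq 0$ and $\sum_I y_I^N=1$ are immediate from $M_n(y)\succeq 0$ and $y_\emptyset=1$. Granted the strict positivity \eqref{posval} (discussed below), Lemma~\ref{lemma:necessary_and_sufficient_for_PSD} applied to each $\ell$ leaves only two possibilities: either $g_\ell(x_I)\geq 0$ for every $I\subseteq N$, or exactly one $K_\ell$ has $g_\ell(x_{K_\ell})<0$ together with the harmonic-sum condition~\eqref{eq:gap_thm_constrained_psd}. The first possibility would make $g_\ell$ satisfied by every vertex of the hypercube, hence redundant, contradicting the standing assumption that no such constraint is present. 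So the second possibility holds for every $\ell$, which gives exactly \eqref{eq:gap_thm_violated_constraint}--\eqref{eq:gap_thm_constrained_psd}, and \eqref{eq:gap_theorem_gap_condition} is simply the gap $\sum_If_Iy_I<f(x_{I^*})$ rewritten through the same identity.

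The main obstacle is upgrading $y_I^N\geq 0$ to the strict inequality \eqref{posval}: $M_n(y)\succeq 0$ only delivers nonnegativity, yet it is strict positivity on every $I$, including those whose integer vertex $x_I$ is infeasible for $\mathbb{P}$, that rules out the ``all nonnegative'' branch used in the previous paragraph. I would handle this by a perturbation inside the convex feasible region of $\LAS_{n-1}(\mathbb{P})$: combine the gap-realising $y$ with a second feasible solution $\bar y$ whose transform is strictly positive on the subsets where $y^N$ vanishes, and form $y^\epsilon=(1-\epsilon)y+\epsilon\bar y$ with $\epsilon>0$ small enough to preserve the strict gap, giving $y^{\epsilon,N}_I>0$ for every $I$ by continuity. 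Producing such a $\bar y$ is the delicate step, since mass must be injected both on feasible vertices (easy, via convex combinations of integral Dirac masses) and on any currently unsupported infeasible vertex (reached by mixing with an auxiliary gap solution whose ``exactly one negative'' regime picks that vertex as its $K_\ell$). This perturbation is what ultimately justifies the strict inequalities appearing in \eqref{posval}, \eqref{eq:gap_thm_violated_constraint}, and \eqref{eq:gap_thm_not_violated_constraint}.
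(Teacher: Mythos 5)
Your dictionary between the PSD constraints and the M\"obius-transformed variables is the right one and matches the paper: $M_n(y)\succeq 0$ (recall $d=1$ here, so the moment matrix at level $n-1$ is indeed $M_n(y)$) gives $y_I^N\geq 0$ via Lemma~\ref{prob}, and each $M_{n-1}(g_\ell*y)$ is handled by Lemma~\ref{almostdiag}, Lemma~\ref{th:constentry} and Lemma~\ref{lemma:necessary_and_sufficient_for_PSD}. The ``if'' direction as you describe it is fine. The problem is in the ``only if'' direction, precisely at the step you yourself flag as delicate: the upgrade from $y_I^N\geq 0$ to the strict inequality \eqref{posval}. Your proposed perturbation is not justified. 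To inject mass on an infeasible vertex $x_I$ you would need an auxiliary \emph{feasible} point $\bar y$ of $\LAS_{n-1}(\mathbb{P})$ with $\bar y_I^N>0$; by the very analysis you are running, such a $\bar y$ must sit in the ``exactly one negative entry'' regime of Lemma~\ref{lemma:necessary_and_sufficient_for_PSD} with $K_\ell=I$ for every constraint violated by $x_I$, and its existence is essentially an instance of the statement being proved. You assert it (``reached by mixing with an auxiliary gap solution'') without an argument, and it need not hold for a general instance. As written, the proof of \eqref{posval} --- on which your redundancy argument for ruling out the ``all nonnegative'' branch depends --- is circular.

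The fix is much simpler and is how the paper proceeds: no perturbation is needed because the degenerate case cannot occur. From $y_I^N\geq 0$, $\sum_I y_I^N=1$ and the identity $\sum_I f_Iy_I=\sum_I f(x_I)y_I^N$, a strict gap forces some $y_J^N>0$ with $x_J$ infeasible, hence some constraint $\ell_0$ with $g_{\ell_0}(x_J)y_J^N<0$. Applying Lemma~\ref{lemma:necessary_and_sufficient_for_PSD} to the shifted vector $g_{\ell_0}*y$, which has a negative diagonal entry, forces \emph{every} entry $g_{\ell_0}(x_I)y_I^N$ to be nonzero (exactly one strictly negative, all others strictly positive). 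Combined with $y_I^N\geq 0$ this yields $y_I^N>0$ for all $I$, i.e.\ \eqref{posval}, for the original solution $y$ itself. With strict positivity in hand, your redundancy argument then correctly disposes of the ``all nonnegative'' branch for every other constraint, giving \eqref{eq:gap_thm_violated_constraint}--\eqref{eq:gap_thm_constrained_psd}. So the architecture of your proof is the paper's, but the one step you identify as the crux is resolved incorrectly; replace the perturbation with the observation above.
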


\begin{proof}
We start justifying the feasibility conditions \eqref{posval}-\eqref{eq:gap_thm_constrained_psd}.
Consider a constrained problem with $m \geq 1$ constraints. By Lemma~\ref{prob}, at level $n-1$, the variables $y_I^N$ of a feasible solution must all be nonnegative and satisfy \eqref{eq:gap_thm_sum_condition1}, and hence can be seen as coefficients of convex combinations of the points $\set{0,1}^n$ (where $y_I^N$ is the coefficient of the solution $x_I$). 

If there is an integrality gap, then the projection of $y$ to the $\{0,1\}^n$ hypercube is not a convex combination of the feasible solutions of $\mathbb{P}$, and at least one variable $y_I^N$ must be positive such that the corresponding integer point $x_I$ is infeasible in the integer program. This implies that for some $\ell \in [m]$ and $J \subseteq N$ we must have $g_\ell(x_J)y_J^N < 0$. By Lemma \ref{lemma:necessary_and_sufficient_for_PSD}, \eqref{eq:gap_thm_violated_constraint}-\eqref{eq:gap_thm_constrained_psd} are necessary and sufficient for any such constraint in order to have the corresponding moment matrix PSD (this in particular implies that all $y_I^N > 0$ for all $I\subseteq N$ and~\eqref{posval} holds).

%
Next, assume that there is a constraint $\ell$ such $g_\ell(x_I)y_I^N\geq 0$ for each $I \subseteq N$. But then for every $I$, $g_{\ell}(x_{I}) \geq0$ (since $y_I^N>0$), which means that the constraint does not rule out any integer points of the set $\set{0,1}^n$ and therefore it is redundant to the problem.
Finally, Equation \eqref{eq:gap_theorem_gap_condition} is implied by the definition of the integrality gap. 
\end{proof}

\begin{definition}
 We call $g(x) \geq 0$ a \emph{Single Vertex Cutting} (SVC) constraint if there exists only one $I \subseteq N$ such that $g(x_I) < 0$ and for every other $I \neq J \subseteq N$ it holds $g(x_J) > 0$.
\end{definition}

\begin{corollary} \label{lemma:gap_problems_have_svc_constraints}
 Let $f(x_{I^*})$ denote the integral optimum of~\eqref{eq:polyproblem}. If there is an integrality gap, i.e., $y \in \LAS_{n-1}(\mathbb{P})$ such that $\sum_{I \subseteq N} y_I^N f(x_I) < f(x_{I^*})$, then the constraints in~\eqref{eq:polyproblem} are SVC.
\end{corollary}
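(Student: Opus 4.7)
The plan is to obtain the corollary as a direct consequence of Theorem~\ref{th:gap_theorem}. Suppose an integrality gap at level $n-1$ exists, so that there is a feasible $y \in \LAS_{n-1}(\mathbb{P})$ with $\sum_{I\subseteq N} y_I^N f(x_I) < f(x_{I^*})$. By Theorem~\ref{th:gap_theorem}, such a $y$ must satisfy conditions~\eqref{posval}--\eqref{eq:gap_theorem_gap_condition}; in particular $y_I^N > 0$ for every $I \subseteq N$.

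Next, I would fix an arbitrary constraint index $\ell \in [m]$ and read off the sign pattern of $g_\ell$ over the integral points. From~\eqref{eq:gap_thm_violated_constraint} there is exactly one $K_\ell \subseteq N$ with $g_\ell(x_{K_\ell}) y_{K_\ell}^N < 0$; since $y_{K_\ell}^N > 0$ by~\eqref{posval}, this forces $g_\ell(x_{K_\ell}) < 0$. Analogously, from~\eqref{eq:gap_thm_not_violated_constraint} every $J \neq K_\ell$ satisfies $g_\ell(x_J) y_J^N > 0$, and using $y_J^N > 0$ again gives $g_\ell(x_J) > 0$. Thus $g_\ell$ is strictly negative at exactly one vertex of $\{0,1\}^n$ and strictly positive at all remaining vertices, which is precisely the SVC property.

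Since the argument applies to every $\ell \in [m]$, every constraint of~\eqref{eq:polyproblem} is SVC, establishing the corollary. The only step that requires any thought is the conversion from the mixed sign condition ``$g_\ell(x_I)\, y_I^N$ has such-and-such sign'' to a statement purely about $g_\ell(x_I)$; this is resolved immediately by the strict positivity of all $y_I^N$ granted by~\eqref{posval}, so there is no real obstacle beyond invoking Theorem~\ref{th:gap_theorem}.
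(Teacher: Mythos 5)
Your proposal is correct and follows essentially the same route as the paper: invoke Theorem~\ref{th:gap_theorem}, then use conditions~\eqref{eq:gap_thm_violated_constraint} and~\eqref{eq:gap_thm_not_violated_constraint} together with the strict positivity of all $y_I^N$ from~\eqref{posval} to read off the sign of $g_\ell$ at each vertex. The paper's own proof is just a terser version of the same argument, leaving the division by $y_I^N>0$ implicit.
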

\begin{proof}
 Assume there exists a solution $y \in \LAS_{n-1}(\mathbb{P})$ such that $\sum_{I \subseteq N} y_I^N f(x_I) < f(x_{I^*})$. Then by Theorem \ref{th:gap_theorem}, for any constraint $\ell$ equations \eqref{eq:gap_thm_violated_constraint}-\eqref{eq:gap_thm_constrained_psd} must hold. In particular, \eqref{eq:gap_thm_violated_constraint} and \eqref{eq:gap_thm_not_violated_constraint} imply that there can be only one violating assignment of the constraint $g_\ell(x)$, and no assignment can be such that $g_\ell(x) = 0$.
\end{proof}

We are considering only problems with linear constraints over $\set{0,1}^n$, so it is straightforward to characterize the SVC constraints.
\begin{lemma} \label{lemma:linear_svc_constraints}
 Let $g(x) = \sum_{i=1}^n a_ix_i - b \geq 0$ be a linear SVC constraint. Then $b \neq 0$ and $a_i \neq 0$ for all $i$, and if $P$ is the set of indices such that $a_i < 0 \Leftrightarrow i \in P$, then $\sum_{i \in P} a_i < b$, but $\sum_{i \in Q} a_i > b$ for all $P \neq Q \subseteq N$.
\end{lemma}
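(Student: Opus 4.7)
The plan is to unpack the SVC condition — one strictly negative value and all other evaluations strictly positive — directly in terms of the coefficients $a_i$ and the constant $b$. I would proceed in four short steps, arguing first $a_i \neq 0$, then identifying the unique violating assignment as $x_P$, then reading off the two inequality conditions, and finally handling $b \neq 0$ as a simple corollary.

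First I would show that $a_i \neq 0$ for every $i \in N$. Suppose towards a contradiction that $a_i = 0$ for some $i$. Then for any $I \subseteq N$, $g(x_I) = g(x_{I \triangle \{i\}})$, where $I \triangle \{i\}$ is the symmetric difference. Applying this to $I$ being the (existing, by SVC) unique set with $g(x_I) < 0$ would produce a second, distinct set $I \triangle \{i\} \neq I$ with the same strictly negative value, contradicting uniqueness of the violating assignment in the definition of SVC.

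Next I would pin down the violating assignment. Since $g$ is linear in the $x_i$ and the domain is $\{0,1\}^n$, $g(x_I) = \sum_{i\in I} a_i - b$ is minimized over $I \subseteq N$ by the greedy choice of including exactly those indices with $a_i < 0$, that is, by $I = P$. Because each $a_i \neq 0$ by the previous step, flipping any coordinate strictly changes the sum, so the minimizer is unique; hence $P$ is the only candidate for the violating set. SVC then forces $g(x_P) < 0$, i.e.\ $\sum_{i\in P} a_i - b < 0$, which gives $\sum_{i\in P} a_i < b$. For every $Q \subseteq N$ with $Q \neq P$, SVC furnishes $g(x_Q) > 0$, which rearranges to $\sum_{i\in Q} a_i > b$.

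Finally, for $b \neq 0$: apply the already established inequalities at $I = \varnothing$. If $P = \varnothing$, the first inequality reads $0 = \sum_{i\in P} a_i < b$, so $b > 0$. If $P \neq \varnothing$, then $\varnothing \neq P$ and the second inequality yields $0 = \sum_{i \in \varnothing} a_i > b$, so $b < 0$. In either case $b \neq 0$. No serious obstacle is anticipated; the only subtlety is remembering that the SVC definition is with strict inequalities on both sides (so $g$ never vanishes on $\{0,1\}^n$), which is exactly what drives both the uniqueness-of-minimizer argument and the nonvanishing of $b$.
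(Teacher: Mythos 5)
Your proposal is correct and follows essentially the same route as the paper: the $a_i \neq 0$ step is the identical symmetric-difference argument, and your $b \neq 0$ observation amounts to the paper's remark that $g(x_\varnothing) = -b$ cannot vanish under SVC. You additionally spell out the part the paper dismisses as ``following from the definition'' (that $P$ minimizes $g$ over the hypercube and hence must be the unique violating set), which is a welcome bit of extra rigor but not a different approach.
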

\begin{proof}
Let $g(x)$ be an SVC constraint. If $b = 0$, then $g(x_\varnothing) = 0$, which is not allowed by definition. Next we show that $a_i \neq 0$ for every $i = 1,...,n$: Assume this is not the case, and let $a_j = 0$. By definition, there is an $I \subseteq N$ such that $g(x_I) < 0$. But since $a_j = 0$, the variable $x_j$ is not present in the constraint $g(x)$, and thus also $g(x_{I \Delta \set{j}}) < 0$ (where $\Delta$ denotes the xor operator). Therefore $g(x)$ cannot be SVC. The remaining part follows from the definition of SVC constraint.
\end{proof}

\subsection{Example problems with SVC constraints at level $n-1$}

As proved in Corollary \ref{lemma:gap_problems_have_svc_constraints}, SVC constraints are in some sense the most difficult constraints to handle for the Lasserre hierarchy. Each such constraint excludes only one point of the $\{0,1\}^n$ hypercube, and thus the feasible set of any integer linear program can be modeled using only these constraints. It follows that if modeled in this way, any integer linear program can potentially have an integrality gap at level $n-1$ of the Lasserre hierarchy. In this section we give two examples of problems where the Lasserre hierarchy does not converge to the integer polytope even at level $n-1$.

\subsubsection{Unbounded integrality gap for the \textsc{Min Knapsack}}

One problem where the SVC constraint naturally arises is the \textsc{Knapsack} problem. We show that the minimization version of the problem has an unbounded integrality gap at level $n-1$ of the Lasserre hierarchy. Indeed, consider the following simple instance of the \textsc{Min-Knapsack}:
\begin{eqnarray}\label{LP:minKnapGap}
\begin{array}{rll}
(GapKnap)
 \min \{ \sum_{i=1}^n x_i| \sum_{i=1}^{n} x_i \geq 1/P, x_i \in \{0,1\} \mbox{ for } i\in[n]\}
\end{array}
\end{eqnarray}
Notice that the optimal \emph{integral} value of ($GapKnap$) is one. The optimal value of the linear programming relaxation of ($GapKnap$) is $1/P$, so the integrality gap of the LP is $P$ and can be arbitrarily large.

By using Theorem~\ref{th:gap_theorem} we prove the following dichotomy-type result. If we allow a ``large'' $P$ (exponential in the number of variables $n$), then the Lasserre hierarchy is of no help to limit the unbounded integrality  gap of ($GapKnap$), even at level $(n-1)$. This analysis is tight since $\LAS_{n}(GapKnap)$ admits an optimal integral solution with $n$ variables. We also show that the requirement that $P$ is exponential in $n$ is necessary for having a ``large'' gap at level $(n-1)$.

\begin{corollary}\label{th:knapgapbounds}
(Integrality Gap Bounds for \textsc{Min-Knapsack})
The integrality gap of $\LAS_{n-1}(GapKnap)$ is $k$, for any $k\geq 2$ if and only if $P= \Theta(k) \cdot 2^{2n}$.
%
\end{corollary}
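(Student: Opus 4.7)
The plan is to apply Theorem~\ref{th:gap_theorem} to the single constraint of $(GapKnap)$, exploit the $\mathrm{Sym}(n)$-symmetry of the instance to reduce the analysis to one variable per subset size, and then match a Cauchy--Schwarz lower bound on $P$ with an explicit construction of a feasible $y$ that attains it.

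First observe that $g(x) := \sum_i x_i - 1/P \geq 0$ is an SVC constraint with $K = \varnothing$, since $g(x_\varnothing) = -1/P < 0$ while $g(x_I) = |I| - 1/P > 0$ for every $I \neq \varnothing$. Combining Theorem~\ref{th:gap_theorem} with Lemma~\ref{th:constentry}, $y \in \LAS_{n-1}(GapKnap)$ iff $y_I^N > 0$ for all $I \subseteq N$, $\sum_I y_I^N = 1$, and
$$
\sum_{\varnothing \neq I \subseteq N} \frac{1}{(|I|-1/P)\, y_I^N} \;\leq\; \frac{P}{y_\varnothing^N},
$$
with LP value $\sum_I |I|\, y_I^N$. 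Since the instance is permutation-invariant, the PSD cone is closed under $\mathrm{Sym}(n)$-averaging, and the objective is linear, one can symmetrize without loss of generality and assume $y_I^N = \beta_{|I|}$. The problem then becomes one-dimensional: find $\beta_0,\ldots,\beta_n > 0$ with $\beta_0 + \sum_{s \geq 1}\binom{n}{s}\beta_s = 1$ satisfying $\sum_{s \geq 1}\binom{n}{s}/((s-1/P)\beta_s) \leq P/\beta_0$, and compute the smallest attainable LP value $\sum_{s \geq 1} s \binom{n}{s}\beta_s$.

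For the necessity direction ($P = \Omega(k \cdot 2^{2n})$ whenever the gap is $k$), I would apply Cauchy--Schwarz with weights $(s-1/P)\binom{n}{s}\beta_s$ and $\binom{n}{s}/((s-1/P)\beta_s)$:
$$
(2^n-1)^2 \;=\; \Big(\sum_{s \geq 1}\binom{n}{s}\Big)^{\!2} \;\leq\; \Big(\sum_{s \geq 1} (s-1/P)\binom{n}{s}\beta_s\Big)\Big(\sum_{s \geq 1}\frac{\binom{n}{s}}{(s-1/P)\beta_s}\Big) \;\leq\; \frac{1}{k}\cdot\frac{P}{\beta_0}.
$$
Since $\sum_{s \geq 1}\binom{n}{s}\beta_s \leq \sum_{s \geq 1} s\binom{n}{s}\beta_s = 1/k \leq 1/2$, we have $\beta_0 \geq 1/2$, giving $P \geq k\beta_0(2^n-1)^2 = \Omega(k \cdot 2^{2n})$.

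For the sufficiency direction, the Cauchy--Schwarz equality case suggests making $s\beta_s$ constant; I would take $\beta_s := 1/(s k (2^n-1))$ for $s \geq 1$ and $\beta_0 := 1 - \sum_{s \geq 1}\binom{n}{s}\beta_s$. A short calculation shows $\beta_0 \geq 1 - 1/k \geq 1/2$, the LP value equals exactly $1/k$, and the PSD sum simplifies via $s/(s - 1/P) = 1 + O(1/(sP))$ to $k(2^n-1)^2\bigl(1 + O(1/P)\bigr)$. Choosing $P = Ck \cdot 2^{2n}$ with a sufficiently large absolute constant $C$ makes this bounded by $P/\beta_0$, so $y$ is feasible and achieves integrality gap $k$. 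The main obstacle I anticipate is uniformly controlling the $1/P$ perturbation of the denominators in both directions so that the lower and upper bounds on $P$ truly coincide up to constants; this is morally clear because $1/P \ll s$ for $s \geq 1$, but the bookkeeping has to be done carefully to obtain the clean $\Theta(k) \cdot 2^{2n}$ characterization.
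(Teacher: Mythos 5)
Your proposal is correct and follows essentially the same route as the paper: both reduce to the conditions of Theorem~\ref{th:gap_theorem} for the single SVC constraint, construct a feasible solution with $y_I^N\,|I|$ essentially constant to realize the gap, and derive the lower bound $P=\Omega(k\cdot 2^{2n})$ from the PSD inequality via the equality case of Cauchy--Schwarz (the paper phrases this as the minimizer of $\sum 1/(y_I^N|I|)$ having equal summands). Your explicit $\mathrm{Sym}(n)$-symmetrization is a valid but inessential extra step, and the anticipated bookkeeping around the $1/P$ perturbation is handled in the paper exactly as you sketch, by bounding $s/(s-1/P)$ by an absolute constant.
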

\begin{proof}
We start by proving the 'if' direction. Consider the following solution with $P=k\cdot 2^{2n+1}$
\begin{eqnarray}
y^N_{I}&=& \frac{2^{n}}{P|I|-1} \qquad \forall I \subseteq N \text{ and } I\not = \emptyset \label{knapcond2}\\
y_{\emptyset}^N &=& 1-\sum_{I \subseteq N, I\not = \emptyset} y^N_{I} \label{knapcond1}
\end{eqnarray}
The value of the solution is equal to:
\begin{eqnarray*}
\sum_{i=1}^n y_i &=& \sum_{I\subseteq N} y_I^N |I| = \sum_{I\subseteq N}\frac{2^{n}}{P|I|-1} |I| \leq \frac{2^{2n+1}}{P}=\frac{1}{k}
\end{eqnarray*}
so the integrality gap is at least $k$. A direct computation shows that
\begin{eqnarray*}
\sum_{I \subseteq N} \frac{1}{y_{I}^Ng(x_I)} &=& -\frac{P}{y^N_{\emptyset}}
+\sum_{\emptyset \not = I \subseteq N} \frac{P}{y_{I}^N(P|I|-1)}= -\frac{P}{1-\sum_{I \subseteq N, I\not = \emptyset} y^N_{I}}
+\frac{P}{2^n}(2^n-1) < 0
\end{eqnarray*}
and hence, by Theorem~\ref{th:gap_theorem}, \eqref{knapcond2}-\eqref{knapcond1} is a feasible solution to $\LAS_{n-1}(GapKnap)$.

Next we prove the ``only if'' direction. Consider any feasible solution that creates an integrality gap: $\sum_{I\subseteq N} y_I^N |I|=1/k$ (for some $k\geq 2$). Then by Theorem~\ref{th:gap_theorem} we have that
\begin{eqnarray*}
\sum_{I \subseteq N} \frac{1}{y_{I}^Ng(x_I)} &=& -\frac{P}{y^N_{\emptyset}}
+\sum_{\emptyset \not = I \subseteq N} \frac{1}{y_{I}^N(|I|-1/P)}\leq 0
\end{eqnarray*}
For $k \geq 2$ the latter implies that
\begin{eqnarray*}
P&\geq& \left(1-\sum_{I \subseteq N, I\not = \emptyset} y^N_{I}\right)\sum_{\emptyset \not = I \subseteq N} \frac{1}{y_{I}^N(|I|-1/P)} \\
&>&\left(1-\frac{1}{k}\right)\sum_{\emptyset \not = I \subseteq N} \frac{1}{y_{I}^N|I|}= (k-1) (2^n-1)^2=\Theta(k) 2^{2n}
\end{eqnarray*}
The last inequality follows by observing that the minimum value of $\sum_{\emptyset \not = I \subseteq N} \frac{1}{y_{I}^N|I|}$ is when every summands is the same\footnote{The solution $x_i = \frac{1}{n}$, for each $i\in[n]$, is optimal to the optimization problem
$
\min \set{\sum_{i=1}^n \frac{1}{x_i} | \sum_{i=1}^n x_i = 1, x_i \geq 0}
$.\label{footnote}
}, namely $y_I^N |I|= \frac{1}{k(2^n-1)}$, since $\sum_{I\subseteq N} y_I^N |I|=1/k$ and $y_I^N>0$.

\end{proof}
\begin{remark}
We observe that the instance~\eqref{LP:minKnapGap} can be easily ruled out by requiring that each coefficient of any variable must be not larger than the constant term in the knapsack constraint.
However, even with this pruning step, the integrality gap can be made unbounded up to the last but two levels of the Lasserre hierarchy: add an additional variable $x_{n+1}$ only in the constraint (not in the objective function) and increase the constant term to $1+1/P$. 
Any solution for $\LAS_{n-1}(GapKnap)$ can be easily turned into a feasible solution for the augmented instance by setting the new variables $y_I'=y_{I\setminus\{n+1\}}$ for any $I\in \PS_{2t+2}([n+1])$ and observing that any principal submatrix of the new moment matrices has either determinant equal to zero or it is a principal submatrix in the moment matrix of the reduced problem.
\end{remark}

\subsubsection{Undetected empty integer hull}

As discussed at the beginning of this section, any integer linear problem can be modeled using SVC constraints. Formulating the problem in this ``pathological'' way can potentially hinder the convergence of the Lasserre hierarchy. We demonstrate this by showing an extreme example, where the Lasserre hierarchy cannot detect that the integer hull is empty even at level $n-1$.

Consider the feasible set given by (exponentially many) inequalities of the form
\begin{equation} \label{eq:empty_example_constraint}
\sum_{i \in P} (1-x_i)  + \sum_{i \in N\setminus P} x_i \geq b
\end{equation}
for each $P \subseteq N$. Clearly, any integral assignment $I$ such that $x_i = 1$ if $i \in I$ and $x_i = 0$ otherwise, cannot satisfy all of the inequalities when $b$ is positive. However, there exists an assignment of the variables $y_I^N$ that satisfies the conditions of Theorem \ref{th:gap_theorem}, and is hence a feasible solution to the Lasserre relaxation of the polytope described above at level $n-1$, as shown below.

Consider a symmetric solution $y_I^N = \frac{1}{2^n}$ for every $I \subseteq N$ and some constraint of the form \eqref{eq:empty_example_constraint} corresponding to a given set $P \subseteq N$. Now the variables $z_I^N = g(x_I)y_I^N$ satisfy \eqref{eq:gap_thm_violated_constraint} and \eqref{eq:gap_thm_not_violated_constraint}, and we need to check that it is possible to satisfy \eqref{eq:gap_thm_constrained_psd}:
\begin{eqnarray*}
\sum_{I \subseteq N} \frac{1}{z_I^N} = \frac{1}{2^n}\sum_{I \subseteq N} \frac{1}{|P \setminus I|+ |I\setminus P| - b} \leq 0
\Leftrightarrow \sum_{\varnothing \neq I \subseteq N} \frac{1}{|I| - b} \leq \frac{1}{b}
\end{eqnarray*}
When $0< b < \frac{1}{2}$, the above is implied by
$
\sum_{\varnothing \neq I \subseteq N} 2 \leq \frac{1}{b}
$,
so choosing $b = \frac{1}{2^{n+1}}$ makes \eqref{eq:gap_thm_constrained_psd} satisfied.

\subsection{Unconstrained problems at level $n-1$}
Let $f:\{0,1\}^n \rightarrow \mathbb{R}$ be an objective function of a polynomial minimization problem normalized such that $ \min_{x \in \{0,1\}^n} f(x) = 0$ and $ \max_{x \in \{0,1\}^n} f(x) = 1$. We start with the conditions that an unconstrained polynomial optimization problem has to satisfy in order do admit a gap at level $n-1$.

%
\begin{theorem} \label{th:gap_theorem_unconstrained}
Let $\mathbb{P}$ denote an unconstrained polynomial optimization problem of the form \eqref{eq:polyproblem}. 
The Lasserre relaxation $\LAS_{n-1}(\mathbb{P})$ has an integrality gap if and only if there exists a solution $\{y_I^N|I\subseteq N\}$ that satisfies~\eqref{eq:gap_theorem_gap_condition} and the following conditions:
\begin{eqnarray}
\sum_{I \subseteq N} y_I^N &=& 1 \label{eq:gap_thm_sum_condition}\\
y_K^N &<& 0 \qquad \textit{for exactly one } K \subseteq N, \label{eq:gap_thm_negative_variable}\\
y_J^N &>& 0 \qquad \textit{for all } K \neq J \subseteq N, \label{eq:gap_thm_positive_variables}\\
\sum_{I \subseteq N} \frac{1}{y_I^N} &\leq& 0 \label{eq:gap_thm_unconstrained_psd}
\end{eqnarray}
\end{theorem}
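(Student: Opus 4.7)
The plan is to mirror the structure of Theorem~\ref{th:gap_theorem}, using the fact that in the unconstrained case ($m=0$, $d=0$) the only PSD condition in Definition~\ref{lassDef} is $M_{n-1}(y) \succeq 0$ together with $y_\varnothing = 1$. First I would reduce $M_{n-1}(y) \succeq 0$ to the almost-diagonal form of Lemma~\ref{almostdiag}, and then apply the eigenvalue characterization of Lemma~\ref{lemma:necessary_and_sufficient_for_PSD} directly to the entries $y_I^N$. The normalization $y_\varnothing = 1$ combined with the Möbius identity $y_\varnothing = \sum_{J\subseteq N} y_J^N$ (recalled just before Lemma~\ref{prob}) immediately yields condition~\eqref{eq:gap_thm_sum_condition}, and the gap condition~\eqref{eq:gap_theorem_gap_condition} is exactly the definition of an integrality gap expressed in the $y^N_I$ basis via the identity $\sum_I f_I y_I = \sum_I f(x_I) y_I^N$.

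For the ``only if'' direction, I would take any $y \in \LAS_{n-1}(\mathbb{P})$ that achieves a value strictly below the integer optimum $f(x_{I^*})$. By Lemma~\ref{lemma:necessary_and_sufficient_for_PSD} applied with $w = y$, the PSD condition forces one of two mutually exclusive cases: (a) $y_I^N \geq 0$ for every $I \subseteq N$, or (b) conditions~\eqref{eq:gap_thm_negative_variable}--\eqref{eq:gap_thm_unconstrained_psd} hold. In case (a), since $\sum_I y_I^N = 1$, the vector $(y_I^N)_{I\subseteq N}$ is a probability distribution over $\{0,1\}^n$ and the Lasserre value equals $\sum_I y_I^N f(x_I) \geq \min_I f(x_I) = f(x_{I^*})$, contradicting the presence of a gap. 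Hence case (b) must hold, which is exactly the required conclusion.

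For the ``if'' direction, given any $y$ satisfying \eqref{eq:gap_thm_sum_condition}--\eqref{eq:gap_thm_unconstrained_psd} and \eqref{eq:gap_theorem_gap_condition}, I would verify feasibility of $y$ in $\LAS_{n-1}(\mathbb{P})$ by first reconstructing the moment vector from the $y_I^N$ via $y_I = \sum_{I \subseteq J} y_J^N$; this recovers $y_\varnothing = 1$ from \eqref{eq:gap_thm_sum_condition}, and Lemmas~\ref{almostdiag} and \ref{lemma:necessary_and_sufficient_for_PSD} (applied to the second branch) guarantee $M_{n-1}(y) \succeq 0$. Since there are no constraint moment matrices to worry about, this is the only PSD requirement, and \eqref{eq:gap_theorem_gap_condition} ensures a gap.

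The main conceptual step that needs care is the dichotomy used in the ``only if'' direction: one must argue that a \emph{strictly} negative $y_K^N$ must occur whenever there is a gap, ruling out the case where $y_K^N = 0$ with some other entry negative. This is handled by Lemma~\ref{lemma:necessary_and_sufficient_for_PSD}, whose two branches are exhaustive and whose second branch already forces all non-$K$ entries to be \emph{strictly} positive, so no boundary/zero issues remain. Everything else in the proof is a direct specialization of the constrained characterization of Theorem~\ref{th:gap_theorem} to the case where $m=0$ and the only PSD matrix is $M_{n-1}(y)$ itself.
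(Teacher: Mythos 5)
Your proposal is correct and follows exactly the route the paper intends: the paper's own ``proof'' is just the one-line remark that the argument is the unconstrained analogue of Theorem~\ref{th:gap_theorem}, and you have spelled out that analogue faithfully (applying Lemma~\ref{lemma:necessary_and_sufficient_for_PSD} directly to $w=y$, using the dichotomy to rule out the all-nonnegative case when a gap is present, and reconstructing the moment vector for the converse). No gaps; if anything, your write-up is more detailed than the paper's.
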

\begin{proof}
The proof is the unconstrained analogue of the proof of Theorem~\ref{th:gap_theorem}.
\end{proof}

We note that $f$ can always be represented as a multivariate polynomial of degree at most $n$. The main result of this section is Theorem~\ref{th:polygap}. 

\begin{theorem}\label{th:polygap}
 If $f$ is a function such that $f$ has an integrality gap at level $n-1$, then $f$ is a multivariate polynomial of degree $n$.
\end{theorem}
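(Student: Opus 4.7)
The plan is to prove the contrapositive: if $\deg f \le n-1$, then no pseudo-distribution $y$ satisfying the conditions of Theorem~\ref{th:gap_theorem_unconstrained} can produce a gap. The strategy is to show that any such $y$ can be ``rounded'' to a genuine probability distribution $\tilde y$ over $\{0,1\}^n$ without changing the objective value, which would force $\sum_I y_I^N f(x_I) = \sum_I \tilde y_I^N f(x_I) \ge f(x_{I^*})$ and contradict \eqref{eq:gap_theorem_gap_condition}.

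First, I would extract a key consequence of \eqref{eq:gap_thm_unconstrained_psd}. Writing $K$ for the unique subset with $y_K^N<0$ and rearranging $\sum_I 1/y_I^N \le 0$ gives $\sum_{I\neq K}\tfrac{1}{y_I^N} \le \tfrac{1}{|y_K^N|}$. Since every term on the left is positive, each individual term satisfies $\tfrac{1}{y_I^N} \le \tfrac{1}{|y_K^N|}$, so $y_I^N \ge |y_K^N|$ for all $I\neq K$. This inequality will give us the room to perturb $y$.

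Next, I would define $\tilde y_I^N := y_I^N + (-1)^{|I|+|K|}\,|y_K^N|$. A direct check gives $\tilde y_K^N = 0$; when $|I|\equiv |K|\pmod 2$ and $I\neq K$, $\tilde y_I^N = y_I^N + |y_K^N|>0$; and when $|I|\not\equiv|K|\pmod 2$, the step above guarantees $\tilde y_I^N = y_I^N - |y_K^N|\ge 0$. The sum is preserved because $\sum_I(-1)^{|I|}=0$ for $n\ge 1$, hence $\sum_I\tilde y_I^N = \sum_I y_I^N = 1$. Therefore $\tilde y$ is an honest probability distribution on $\{0,1\}^n$, and in particular $\sum_I \tilde y_I^N f(x_I)\ge \min_{x\in\{0,1\}^n} f(x) = f(x_{I^*})$.

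Finally, I would observe that the perturbation $y_I^N - \tilde y_I^N = -(-1)^{|K|}|y_K^N|\cdot(-1)^{|I|}$ is a scalar multiple of $(-1)^{n-|I|}$. Using the identity $\sum_{I\subseteq N}(-1)^{n-|I|}f(x_I) = f_N$ (easily verified by expanding $f(x_I)=\sum_{S\subseteq I}f_S$ and pairing over any index missing from $S$), the hypothesis $\deg f \le n-1$ yields $f_N = 0$, so $\sum_I(y_I^N - \tilde y_I^N)f(x_I)=0$. Combining this with the previous step gives $\sum_I y_I^N f(x_I) = \sum_I \tilde y_I^N f(x_I) \ge f(x_{I^*})$, which contradicts \eqref{eq:gap_theorem_gap_condition}. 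The main conceptual step is recognizing that $((-1)^{n-|I|})_{I\subseteq N}$ spans the one-dimensional kernel of evaluation of degree-$(n-1)$ polynomials; the technical step that could be delicate is the sign bookkeeping in verifying $\tilde y_I^N\ge 0$, which is why the bound $y_I^N\ge |y_K^N|$ derived from \eqref{eq:gap_thm_unconstrained_psd} is essential.
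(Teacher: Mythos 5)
Your proof is correct, and it takes a genuinely different route from the paper. The paper proceeds through a chain of two auxiliary lemmas: it first shows by an explicit symmetrization computation that a $0/1$-valued function with exactly two ones has no gap (Lemma~\ref{lemma:inclusion_exclusion_polynomial}), then reduces any function with $\sum_I f(x_I)\geq 2$ to that case by reassigning mass (Lemma~\ref{lemma:no_gap_condition}), and finally derives a contradiction with the vanishing of the top Fourier coefficient $\hat{h}(N)$ after passing to $\{-1,1\}^n$. You instead work directly with the characterization of Theorem~\ref{th:gap_theorem_unconstrained}: the observation that \eqref{eq:gap_thm_unconstrained_psd} forces $y_I^N \geq |y_K^N|$ for every $I \neq K$ (each positive reciprocal is bounded by the whole positive sum, which is at most $1/|y_K^N|$) is exactly the slack needed to shift $y$ along the alternating-parity vector $((-1)^{|I|})_I$ into a genuine probability distribution $\tilde y$ with $\tilde y_K^N=0$, and the identity $\sum_{I}(-1)^{n-|I|}f(x_I)=f_N$ guarantees the objective is unchanged when $\deg f\leq n-1$. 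Both arguments ultimately exploit the same underlying fact --- that $((-1)^{|I|})_I$ spans the unique direction invisible to degree-$(n-1)$ polynomials --- but yours detects it through the multilinear coefficient $f_N$ rather than the Fourier coefficient $\hat h(N)$, avoids the two auxiliary lemmas and the normalization of $f$, and yields a slightly stronger structural conclusion: for objectives of degree at most $n-1$, every level-$(n-1)$ pseudo-distribution attains a value that is also attained by an honest distribution, so no gap is possible. Your sign bookkeeping and the verification that $\sum_I \tilde y_I^N = 1$ are all correct as written.
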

\begin{proof}
We will use some elementary Fourier analysis of boolean functions (see e.g.~\cite[Ch. 1]{ODonnell14}). To follow an established convention, we switch from studying the function $f:\set{0,1}^n \rightarrow \mathbb{R}$ to $h:\set{-1,1}^n \rightarrow \mathbb{R}$ via the bijective transform $f(x) = h(1-2x)$. Observe that $f$ is of degree $t$ if and only if $h$ is of degree $t$, and for any $S \subseteq N$ we have $f(x_S) = h(w_S)$, where $w_i = -1$ if $i \in S$ and $w_i = 1$ otherwise.

Assume as before that for some $I_1 \subseteq N$, $h(w_{I_1}) = 1$ and $0 \leq h(w_I) \leq 1$. We assume that $|I_1|$ is even and let $I_2 \subseteq N$ be some fixed set such that $|I_2|$ is odd (the case where $|I_1|$ is odd is symmetric). We assume that $h$ has an integrality gap, so by Lemma~\ref{lemma:no_gap_condition} (see below) necessarily $\sum_{I \subseteq N} h(w_I) < 2$, which we rewrite in a more convenient form (using $h(w_{I_1}) = 1$)
\begin{equation} \label{eq:assumption_on_h}
h(w_{I_2}) < 1 - \sum_{I_1 \neq I \neq I_2} h(w_I)
\end{equation}

Assume now that $h$ has a degree smaller than $n$, or in other words, its Fourier coefficient $\hat{h}(N)$ is 0:
$$
\hat{h}(N) = 2^{-n} \sum_{S \subseteq N} h(w_S) (-1)^{|S|} = 0
$$
Removing the normalizing constant and reordering the sum the above implies (using the assumptions on the parity of $|I_1|, |I_2|$)
$$
 \sum_{\substack{S \neq I_1 \\ |S| \text{ even}}} h(w_S) - \sum_{\substack{S \neq I_2 \\ |S| \text{ odd}}} h(w_S) = -1 + h(w_{I_2}) < - \sum_{I_1 \neq I \neq I_2} h(w_I)
$$
by \eqref{eq:assumption_on_h}. Moving all the $h$ terms to the left hand side yields
$$
2\sum_{\substack{S \neq I_1 \\ |S| \text{ even}}} h(w_S) < 0
$$
which contradicts the assumption that $h(w) \geq 0$.
\end{proof}
\begin{lemma} \label{lemma:no_gap_condition}
Let $f(x_{I_1}) = 1$ and $0 \leq f(x) \leq 1$ for every $x \in \set{0,1}^n$. If $f$ is such that
$
\sum_{I \subseteq N} f(x_I) \geq 2
$
then there is no gap at level $n-1$.
\end{lemma}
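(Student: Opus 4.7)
The plan is to prove the contrapositive: assume an integrality gap exists at level $n-1$ and derive $\sum_{I \subseteq N} f(x_I) < 2$. By Theorem~\ref{th:gap_theorem_unconstrained} such a gap produces reals $a_I := y_I^N$ with exactly one $a_K < 0$, all other $a_J > 0$, $\sum_I a_I = 1$, $\sum_I 1/a_I \leq 0$, and $\sum_I a_I f(x_I) < 0$. I set $b := -a_K > 0$ and $c_I := f(x_I) \in [0,1]$, and work over the index set $N' := \{J \subseteq N : J \neq K\}$.

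First I would separate the $K$-term in the two PSD/gap inequalities to obtain
\[
\sum_{J \in N'} \frac{1}{a_J} \;\leq\; \frac{1}{b}, \qquad \sum_{J \in N'} a_J c_J \;<\; b\, c_K.
\]
Note that $c_K > 0$: otherwise the second inequality would read $\sum_{J \in N'} a_J c_J < 0$, which is impossible because every term on the left is nonnegative.

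The heart of the argument is Cauchy--Schwarz, which decouples weights from values:
\[
\Bigl(\sum_{J \in N'} c_J\Bigr)^2 \;=\; \Bigl(\sum_{J \in N'} \sqrt{a_J c_J}\cdot \sqrt{c_J/a_J}\Bigr)^2 \;\leq\; \Bigl(\sum_{J \in N'} a_J c_J\Bigr)\Bigl(\sum_{J \in N'} c_J/a_J\Bigr).
\]
Using $c_J \leq 1$ (the normalization $\max f = 1$) gives $\sum_{J \in N'} c_J/a_J \leq \sum_{J \in N'} 1/a_J \leq 1/b$. Combining this with the strict gap inequality above, the right-hand side is strictly less than $b\, c_K \cdot (1/b) = c_K$, so $\sum_{J \in N'} c_J < \sqrt{c_K}$.

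Finally, since $c_K \in (0,1]$ the elementary inequality $c_K + \sqrt{c_K} \leq 2$ (with equality only at $c_K = 1$) yields
\[
\sum_{I \subseteq N} f(x_I) \;=\; c_K + \sum_{J \in N'} c_J \;<\; c_K + \sqrt{c_K} \;\leq\; 2,
\]
which is the desired contradiction with the hypothesis. The main obstacle I anticipate is spotting that Cauchy--Schwarz combined with $c_J \leq 1$ is the correct inequality; once that is in hand, the proof is short. A subtle point is that the strictness must come from the gap condition (not from Cauchy--Schwarz itself), which is why I want to carry $<$ through carefully, and why the edge case $c_K = 1$ (where $c_K + \sqrt{c_K} = 2$) still produces $< 2$.
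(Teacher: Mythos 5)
Your proof is correct, but it takes a genuinely different route from the paper's. The paper argues by reduction to an extremal instance: assuming $f$ has a gap witnessed by $y$, it builds a two-valued function $\bar f$ (equal to $1$ at $x_{I_1}$ and at the point $x_{I_2}$ carrying the smallest positive mass, and $0$ elsewhere) together with a symmetrized solution $\bar y$, shows $\bar f$ would then also have a gap, and contradicts Lemma~\ref{lemma:inclusion_exclusion_polynomial}, which handles such two-point indicator functions by an explicit computation. You instead give a direct analytic argument: after isolating the unique negative variable $a_K=-b$, you combine the PSD condition $\sum_{J\neq K}1/a_J\leq 1/b$ and the gap condition $\sum_{J\neq K}a_Jc_J< b\,c_K$ via Cauchy--Schwarz (using $c_J\leq 1$) to get $\bigl(\sum_{J\neq K}c_J\bigr)^2< c_K$, hence $\sum_{I}f(x_I)<c_K+\sqrt{c_K}\leq 2$. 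The steps all check out: $a_J>0$ for $J\neq K$ (from Theorem~\ref{th:gap_theorem_unconstrained}) legitimizes the square roots and the bound $\sum_J c_J/a_J\leq\sum_J 1/a_J$; the strictness is correctly carried from the gap inequality (multiplying $\sum_J a_Jc_J<bc_K$ by the nonnegative factor $\sum_J c_J/a_J\leq 1/b$ is sound); and $c_K>0$ is justified. The one convention you use implicitly is the same one the paper uses, namely the standing normalization $\min f=0$, so that the gap condition reads $\sum_I a_I f(x_I)<0$. What the paper's route buys is a conceptual reduction to the ``hardest'' two-point objective (reusing Lemma~\ref{lemma:inclusion_exclusion_polynomial} and fitting its SVC narrative), at the cost of a somewhat informal symmetrization step; what yours buys is a shorter, self-contained, quantitatively explicit proof that does not need Lemma~\ref{lemma:inclusion_exclusion_polynomial} at all, and in fact does not even use the hypothesis $f(x_{I_1})=1$ beyond $0\leq f\leq 1$.
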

\begin{proof}
The condition on the values of $f$ means loosely speaking that we can ``reassign'' the values of $f$ such that we obtain a new function which satisfies the conditions in Lemma \ref{lemma:inclusion_exclusion_polynomial} (see below). More precisely, if $f$ has an integrality gap, we show that there exists another function satisfying the conditions of Lemma \ref{lemma:inclusion_exclusion_polynomial} that must also have an integrality gap.

Formally, assume $f(x)$ has a gap given by the variables $y_I^N$. We may assume that $y_{I_1}^N < 0$, and let $I_2 \neq I_1$ denote the set such that $y_{I_2}^N$ is the smallest positive variable of the variables $y_I^N$. Next, we define a function $\bar{f}$ as follows: $\bar{f}(x_{I_1})=\bar{f}(x_{I_2}) = 1$, $\bar{f}(x_I) = 0$ for every other $I_1 \neq I \neq I_2$. We show that now $\bar{f}$ must have a gap with the variables $\bar{y}_I^N$ given by
$$
\begin{aligned}
  \bar{y}_{I_1}^N = -\bar{\epsilon},&\\
  \bar{y}_{I_2}^N =   y_{I_2}^N,&\\
  \bar{y}_{I}^N = \frac{1+\bar{\epsilon} - \bar{y}_{I_2}^N}{2^n-2},& \text{ for all the other }I
\end{aligned}
$$
Here we set $\bar{\epsilon}$ such that the inequality \eqref{eq:gap_thm_unconstrained_psd} is tight, so that everything is determined and the equations \eqref{eq:gap_thm_negative_variable}-\eqref{eq:gap_thm_unconstrained_psd} are satisfied. By symmetry,
we have that $\bar{y}_{I_1}^N \leq y_{I_1}^N$.
Now we get that the objective value of the function $\bar{f}$ with variables $\bar{y}_I^N$ at level $n-1$ is
$$
\begin{aligned}
 \sum_{I \subseteq N} \bar{y}_I^N \bar{f}(x_I) = \bar{y}_{I_1}^N + \bar{y}_{I_2}^N \leq y_{I_1}^N + y_{I_2}^N \leq y_{I_1}^N f(x_{I_1}) + y_{I_2}^N \sum_{I_1 \neq I \subseteq N} f(x_I) \\
 \leq y_{I_1}^N f(x_{I_1}) + \sum_{I_1 \neq I \subseteq N} y_I^N f(x_I) = \sum_{I \subseteq N} y_I^N f(x_I) < 0
\end{aligned}
$$
where the second inequality follows directly from assuming that $f(x_{I_1}) = 1$ and that $\sum_{I \subseteq N} f(x_I) \geq 2$, and the third inequality from the assumption that $y_{I_2}$ is the smallest positive variable $y_I^N$. This shows that $\bar{f}$ has a gap with variables $\bar{y}_I^N$, which contradicts Lemma~\ref{lemma:inclusion_exclusion_polynomial}.
\end{proof}

The following lemma is used in the proof of Lemma~\ref{lemma:no_gap_condition}.
\begin{lemma} \label{lemma:inclusion_exclusion_polynomial}
 The function $f(x):\{0, 1\}^n \rightarrow \{0, 1\}$ such that for some $I_1 \neq I_2 \subseteq N$ we have $f(x_{I_1}) = f(x_{I_2}) = 1$ and $f(x_I) = 0$ for all $I_1 \neq I \neq I_2$ has no gap at level $n-1$.
\end{lemma}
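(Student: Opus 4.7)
}

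The plan is to argue by contradiction using the characterization given by Theorem \ref{th:gap_theorem_unconstrained}. First, observe that the integral minimum of $f$ equals $0$ (the case $n=1$ is trivial since then $\{I_1,I_2\}$ exhausts $\PS(N)$ and $f\equiv 1$, precluding any gap; so assume $n\ge 2$, which ensures the existence of some $I$ with $f(x_I)=0$). A gap therefore means the existence of $\{y_I^N\}_{I\subseteq N}$ satisfying \eqref{eq:gap_thm_sum_condition}--\eqref{eq:gap_thm_unconstrained_psd} together with
\[
\sum_{I\subseteq N} y_I^N f(x_I) \;=\; y_{I_1}^N + y_{I_2}^N \;<\; 0.
\]

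By \eqref{eq:gap_thm_negative_variable}--\eqref{eq:gap_thm_positive_variables} exactly one of the variables $y_I^N$ is negative, so the inequality above forces this unique negative variable to be one of $y_{I_1}^N, y_{I_2}^N$. By the symmetric roles of $I_1$ and $I_2$ in the statement, we may assume without loss of generality that $y_{I_1}^N = -a$ with $a>0$ and $y_{I_2}^N > 0$. Then $y_{I_1}^N + y_{I_2}^N < 0$ yields $y_{I_2}^N < a$.

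Finally I would plug this into the PSD condition \eqref{eq:gap_thm_unconstrained_psd}:
\[
-\frac{1}{a} + \frac{1}{y_{I_2}^N} + \sum_{I\neq I_1,I_2}\frac{1}{y_I^N} \;\le\; 0.
\]
All summands in $\sum_{I\neq I_1,I_2}1/y_I^N$ are strictly positive by \eqref{eq:gap_thm_positive_variables}, so in particular $1/y_{I_2}^N \le 1/a$, i.e.\ $y_{I_2}^N \ge a$, contradicting $y_{I_2}^N < a$. Hence no feasible $y$ produces a gap, proving the lemma.

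The argument is short and essentially algebraic once Theorem \ref{th:gap_theorem_unconstrained} is in hand; the only subtlety is the WLOG step, which is justified by the fact that $f$ is symmetric under swapping $I_1$ and $I_2$, so the sign of the single negative variable can be placed at $I_1$. No further obstacle is expected.
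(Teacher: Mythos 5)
Your proof is correct and rests on the same two ingredients as the paper's --- the characterization of Theorem~\ref{th:gap_theorem_unconstrained} and a clash between the gap condition $y_{I_1}^N+y_{I_2}^N<0$ and the PSD condition \eqref{eq:gap_thm_unconstrained_psd} --- but your execution is more direct and avoids the weakest step of the published argument. The paper first \emph{symmetrizes}, setting $y_I^N=\frac{1+\epsilon-\delta}{2^n-2}$ for all $I_1\neq I\neq I_2$ and justifying this reduction only informally (``any other asymmetric assignment would make the left hand side of \eqref{eq:gap_thm_unconstrained_psd} larger''), and then computes $\sum_I 1/y_I^N$ explicitly to reach the contradiction between $\frac{1}{\epsilon}-\frac{1}{\delta}\geq (2^n-2)^2\frac{1}{1+\epsilon-\delta}>0$ and the gap requirement $\epsilon>\delta$. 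You dispense with the symmetrization entirely: since every term $1/y_I^N$ with $I_1\neq I\neq I_2$ is strictly positive, \eqref{eq:gap_thm_unconstrained_psd} already forces $1/y_{I_2}^N\leq 1/a$, hence $y_{I_2}^N\geq a$, directly contradicting $y_{I_2}^N<a$ for an \emph{arbitrary} feasible $y$. This is the same contradiction, obtained by discarding the positive middle terms rather than optimizing over them, so your version is both shorter and more rigorous on that point. A further small plus: you explicitly dispose of the degenerate case $n=1$ (where $f\equiv 1$ and no gap is possible because the integral optimum is $1$), which the paper leaves implicit.
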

\begin{proof}
Assume that the variables $\{y_I^N\}$ are such that there is an integrality gap. By Theorem~\ref{th:gap_theorem_unconstrained} only one variable gets a negative value whereas the others are positive. Moreover, the presence of a gap implies that $y_{I_1}^N<0$ (or equivalently $y_{I_2}^N<0$; Note that any other possibility does not give any integrality gap). Let $y_{I_1}^N = -\epsilon$ for some $\epsilon > 0$. Furthermore, we set $y_{I_2}^N = \delta$ for some $\delta > 0$ and due to symmetry, we set $y_I^N = \frac{1+\epsilon - \delta}{2^n-2}$ for all ${I_1} \neq I \neq {I_2}$. We note that, using any other asymmetric assignment would make the left hand side of the sum \eqref{eq:gap_thm_unconstrained_psd} larger (see Footnote~\ref{footnote}), which means we would have to choose a smaller $\epsilon$ which would yield a smaller gap.

The condition that we have an integrality gap ($\sum_{I \subseteq N} y_I^N f(x_I) <0$) requires that
$
\epsilon > \delta
$.
By condition~\eqref{eq:gap_thm_unconstrained_psd} we have:
$$
\sum_{I \subseteq N} \frac{1}{y_I^N} = \frac{1}{-\epsilon} + (2^n-2)\frac{1}{\frac{1+\epsilon-\delta}{2^n-2}} + \frac{1}{\delta} = \frac{1}{-\epsilon} + (2^n-2)^2\frac{1}{1+\epsilon-\delta} + \frac{1}{\delta} \leq 0
$$
This simplifies to
$$
(2^n-2)^2\frac{1}{1+\epsilon-\delta}  \leq \frac{1}{\epsilon} - \frac{1}{\delta} = \frac{\delta - \epsilon}{\delta \epsilon}
$$
If here $\epsilon > \delta$, the right hand side is always negative, and the left hand side positive. This is a contradiction and hence $f(x)$ cannot have an integrality gap at level $n-1$.
\end{proof}
We point out that there exists a function of degree $n$ that exhibits an integrality gap at level $n-1$. Consider the function given by
$$
f(x) = 1 - \sum_{\varnothing \neq I \subseteq N} (-1)^{|I|}\prod_{i \in I}x_i
$$
This function has the value 1 when all the variables are 0, and 0 elsewhere. It is a straightforward application of Theorem~\ref{th:gap_theorem_unconstrained} to show that $f(x)$ exhibits an integrality gap at level $n-1$. We remark that $f(x)$ can be seen as a constraint satisfaction version of an SVC constraint.





\paragraph{Acknowledgments.} Research supported by the Swiss National Science Foundation project ${200020\_ 144491\slash 1}$ ``Approximation Algorithms for Machine Scheduling Through Theory and Experiments''.
We thank Ola Svensson and the anonymous reviewers  for helpful comments.

{\small
\bibliographystyle{abbrv}
\bibliography{kih}
}

\end{document}